\def\BibTeX{{\rm B\kern-.05em{\sc i\kern-.025em b}\kern-.08em
    T\kern-.1667em\lower.7ex\hbox{E}\kern-.125emX}}
\theoremstyle{remark}
\newtheorem{remark}{Remark}
\newtheorem{theorem}{Theorem}
\newtheorem{corollary}{Corollary}[theorem]
\begin{document}
\title{Novel Active Sensing and Inference for mmWave Beam Alignment Using Single RF Chain Systems
\thanks{This research was supported by National Science Foundation (NSF) under Grant CCF-2124929 and Grant CCF-2225617.}
\thanks{The authors are with the Department of Electrical and Computer Engineering, University of California, San Diego, La Jolla, CA 92092, USA (rpote@ucsd.edu, brao@ucsd.edu).}
}
\author{\IEEEauthorblockN{Rohan R. Pote}
\IEEEauthorblockA{\textit{ECE Department, UC San Diego}}
\and
\IEEEauthorblockN{Bhaskar D. Rao}
\IEEEauthorblockA{\textit{ECE Department, UC San Diego} 
}}
\maketitle
\begin{abstract}
We propose a novel sensing approach for the beam alignment problem in millimeter wave systems using a single Radio Frequency (RF) chain. Conventionally, beam alignment using a single phased array involves comparing beamformer output power
across different spatial regions. This incurs large training overhead due to the need to perform the beam scan operation. The proposed Synthesis of Virtual Array Manifold (SVAM) sensing methodology is inspired from synthetic aperture radar systems and realizes a virtual array geometry over temporal measurements.
We demonstrate the benefits of SVAM using Cram\'er-Rao bound (CRB) analysis over schemes that repeat beam pattern to boost signal-to-noise (SNR) ratio. We also showcase versatile applicability of the proposed SVAM sensing by incorporating it within existing beam alignment procedures that assume perfect knowledge of the small-scale fading coefficient.
We further consider the practical scenario wherein we estimate the fading coefficient and propose a novel beam alignment procedure based on efficient computation of an approximate posterior density on dominant path angle. We provide numerical experiments to study the impact of parameters involved in the procedure. The performance of the proposed sensing and beam alignment algorithm is empirically observed to approach the fading coefficient-perfectly known performance, even at low SNR.

\end{abstract}
\begin{IEEEkeywords}
Single radio frequency chain, virtual array manifold synthesis, coherence interval, active sensing, direction of arrival, hierarchical codebook, Bayesian estimation
\end{IEEEkeywords}
\newlength{\textfloatsepsave} \setlength{\textfloatsepsave}{\textfloatsep} \setlength{\textfloatsep}{0pt}
\section{Introduction}
Millimeter wave (mmWave) technology is essential for expanding the existing capabilities of cellular networks\cite{pi11, rappaport13}. The availability of the large spectrum in the 30-300 GHz spectrum range, and the ability to place many more antennas in the same form factor on a device are very promising avenues for the next generation wireless systems. This has propelled the interest, both in the industry and academia. The envisioned benefits include much higher throughput and low latency. The impact of this technology can be gauged from the numerous use cases enabled by the mmWave technology, which includes industrial-IoT, virtual/augmented reality, biomedical applications, and non-terrestrial networks\cite{bastug17,ghosh19}.   

MmWave technology also faces many challenges. 
The mmWave channel incurs large propagation losses thereby restricting coverage per base station (BS), and requiring additional infrastructure compared to legacy cellular networks. A second challenge is the \emph{sparse} nature of mmWave channel consequently requiring accurate beam alignment. 
This challenge is only further exacerbated by the narrow beamwidths and consequent large codebook size due to the large antenna array dimensions.
Several options are being considered for enhancing coverage at low-cost such as integrated access backhaul and intelligent reflective surfaces. On the other hand, reducing the beam alignment phase duration is a critical and active area of research. Hardware cost also impacts the ability of the transceivers to sense the mmWave channel. The large number of antenna elements are typically supported by only a few Radio Frequency (RF) chains\cite{doan04}, and thus necessitates for a low-dimensional projection of the received signal at the antennas. Beam alignment using such a low-dimensional signal is a challenging problem, and it is the main focus of this work.


There is a need to build a better sensing approach coupled with efficient inference mechanisms that exploit the array geometry and channel characteristics under the hardware constraints. An interesting direction adopted in \cite{chiu19} formulates the beam alignment problem under the posterior matching framework, and actively learns the single path Direction of Arrival (DoA). The work is shown to improve over the detection-based algorithm in \cite{alkhateeb14}. However, the authors in \cite{chiu19} assume that the small-scale fading coefficient is perfectly known. Subsequent effort build upon this work, and estimate both the DoA as well as the fading coefficient \cite{ronquillo19}. The Kalman filter-based posterior matching algorithm in \cite{ronquillo19} still requires good prior density on the small-scale fading coefficient. Similar assumptions on availability of good prior density was made in the variational hierarchical posterior matching algorithm proposed in \cite{akdim20}.

In this work, we reflect on the beam alignment problem from the perspective of active sensing for improved estimation performance. We do so
without relying on additional information such as good prior knowledge about the small-scale fading coefficient.  
The contributions of this work are as follows:\begin{itemize}[leftmargin=*]
    \item A novel sensing methodology, inspired from Synthetic Aperture Radar (SAR), is proposed for the single RF chain mmWave systems. Under the proposed sensing approach, a virtual Uniform Linear Array (ULA) manifold is synthesized over temporal measurements. Extension to construct a virtual arbitrary array geometry such as Sparse Linear Array (SLA) is also discussed. The proposed sensing is useful even in the presence of multipaths and its operation under such scenarios is briefly discussed.
    \item Benefits of the proposed sensing are described, when the channel small-scale fading coefficient is known, in terms of i. its impact on the Cram\'er-Rao lower Bound (CRB) on the variance of estimation of the unknown dominant path angle, when compared to a benchmark scheme, and ii. its ability to be incorporated within existing active beam alignment procedures. The improvement in terms of lower training overhead is demonstrated using numerical experiments.
    \item A novel beam alignment procedure is proposed which adapts the beamformer based on the current estimate of the posterior on the unknown angle. The proposed algorithm estimates a posterior on the small-scale fading along with the angular posterior. Both, flexible and hierarchical codebook-based beam alignment procedures are presented.
    \item Finally, the proposed sensing and beam alignment procedures are empirically studied, and compared with the performance using perfect knowledge of the channel state information. Impact of the parameters involved in the estimation procedure is also studied, which also reveal the ability of the adaptive beam alignment procedure to self-correct in case of premature misalignment during the early phase of the training period.   
\end{itemize}
\subsection{Relevant Prior Work}
A virtual array synthesis from spatial measurements under the reduced number of RF chains constraint for mmWave systems was proposed in \cite{pote19}.
A similar sensing scheme as in \cite{pote19}, was proposed in \cite{lin20} for mmWave multipath angle estimation and in \cite{chen20} for the DoA estimation problem. In \cite{lin20}, the authors proposed using random precoders and combiners that are submatrices of banded Toeplitz matrices. The work focuses on mmWave systems with multiple RF chains, and the ideas are extended to the single RF chain case. In \cite{chen20}, the authors investigated the applicability of root-MUSIC and ESPRIT algorithms, as a consequence of preserving the Vandermonde structure and the shift-invariance under the virtual array synthesis procedure. These sensing methodologies can be traced back to Silverstein\cite{silverstein91} and Tkacenko \cite{tkacenko01}. The sensing scheme proposed in this work synthesizes a virtual array over \emph{temporal measurements}, and considers the practical single phased array system. To the best of our knowledge, the presented adaptive sensing methodology is the first of its kind. 

Many non-adaptive and adaptive beamforming approaches have been proposed for the mmWave beam alignment problem in the past. Random beamforming was proposed in \cite{ramasamy12,berraki14,alkhateeb15}, wherein the inference was carried out using compressed sensing algorithms. This approach does not exploit beamforming gain needed to combat large path loss in the mmWave channels. An exhaustive beam steering approach proposed in IEEE 802.11ad and 5G standards improves the beamforming gain, but can be slow in selecting the appropriate beam. The acquisition time is improved in literature by replacing the \emph{non-adaptive} linear search with an \emph{adaptive} binary (or in general $n$-ary, $n\geq 2$) search within a hierarchical codebook \cite{hur13,alkhateeb14,xiao16,zhang17}. A comparison in terms of asymptotic misalignment probability between the exhaustive search and the hierarchical search was studied in \cite{liu17}. For adaptive schemes using hierarchical codebook, the inference at each hierarchical level is typically carried out by comparing power at the output of different beams within a hierarchical node. The inference mechanism was improved in \cite{chiu19} by computing posterior on the dominant path angle and selecting next beam based on the posterior within the hierarchical codebook of \cite{alkhateeb14}. The work assumes that the small-scale fading coefficient is perfectly known. However, such assumption is difficult to satisfy in practice. The problem of estimating small-scale fading coefficient (along with the unknown path angle) was considered in later works \cite{ronquillo19,akdim20}.
A grid-approach was also proposed in \cite{ronquillo19}, but the relevant issue on how to choose appropriate grid on the small-scale fading coefficient was not addressed. An adaptive beam search algorithm scheme based on posterior computation to compare beams was proposed in \cite{liu22}. Many learning-based approaches have been proposed in literature as well\cite{sohrabi22,wei23,hussain19}. These include approaches that frame the beam alignment problem as multi-armed bandit problem,
or train an end-to-end neural network to design a model-free or codebook-free architecture. In this work, we emphasize the model and propose novel sensing (for improved acquisition) and inference procedures to estimate the unknown parameters. The paper builds on our previous work in \cite{pote23asil}.
\subsection{Organization of the Paper and Notations}
We describe the problem tackled in this work, and introduce the new sensing approach in Section~\ref{sec:probstate_propsens}. In Section~\ref{sec:alphaknown}, we investigate the impact of the proposed sensing for the case when the fading coefficient is perfectly known, and only the unknown DoA is to be recovered. The more practical case, when the fading coefficient needs to be estimated along with the DoA is discussed in Section~\ref{sec:alphaunknown}. We provide empirical results in Section~\ref{sec:numsecalphaunknown}, and present our conclusions in Section~\ref{sec:conc}. 

{\it Notations:} We represent scalars, vectors, and matrices by lowercase, boldface-lowercase, and boldface-uppercase letters, respectively. Sets are represented using blackboard bold letters. $(.)^T,(.)^H,(.)^c$ denotes transpose, Hermitian, and complex conjugate operation respectively. $\otimes$ denotes matrix Kronecker product, and $\odot$ denotes Hadamard product of two conformable matrices. $*$ denotes convolution operation. $[M]=\{0,1,\ldots,M-1\},M\in\mathbb{Z}^+$. 
\section{Problem Statement \& Proposed Novel Sensing}\label{sec:probstate_propsens}
We consider a receiver (base station or user equipment) equipped with a Uniform Linear antenna Array (ULA) of size $N$
and a single RF chain. We assume a flat fading channel, with a single dominant path between the transmitter and receiver. We further assume that the channel remains coherent within the training duration due to low receiver mobility.
\subsection{Problem Statement}
\begin{figure*}
    \centering
    \includegraphics[width=0.8\linewidth]{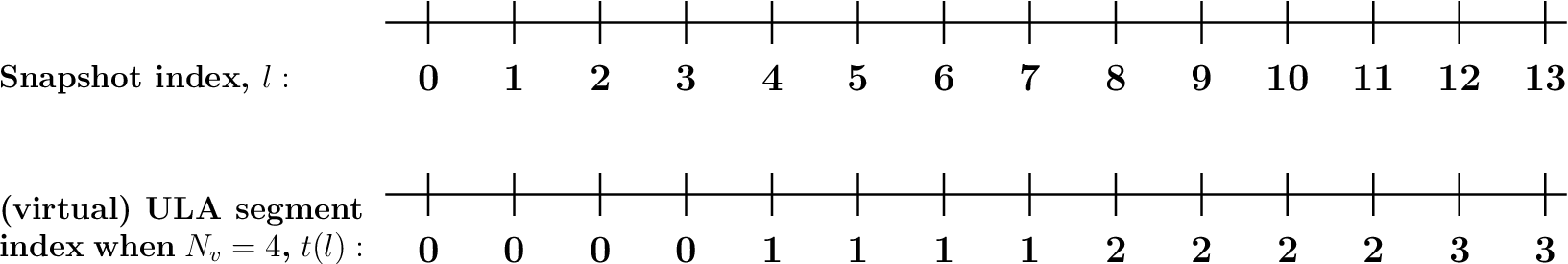}
    \caption{A (virtual) ULA segment of size $N_v=4$ is created using $4$ snapshots. The beamformer, $\mathbf{f}_{t(l)}$, is adapted once in every segment duration.}\vspace{-1em}
    \label{fig:numerology}
\end{figure*}
The received signal at the antennas at instant $l$, $\mathbf{x}_l\in\mathbb{C}^N$, is given by \begin{equation}
    \mathbf{x}_l=\sqrt{P_s}\alpha\bm{\phi}_N(u)+\bar{\mathbf{n}}_l,\quad l\in [L],\label{eq:aesignal}
\end{equation}where $L$ denotes the total training duration. $P_s\geq 0$ denotes the combined contribution of transmitted power and the large-scale fading (path loss and shadowing), $\alpha\in\mathbb{C}$ is the \emph{unknown} small-scale fading coefficient. Since the transmitted symbol is known, it can be easily absorbed within the signal term of $\mathbf{x}_l$ in (\ref{eq:aesignal}). Thus, we assume the transmitted symbol value to be $1$ without loss of generality. $\bm{\phi}_N(u)$ is the array manifold or response vector for an incoming narrowband signal along the angle $u$; $u=\sin\theta,u\in[-1,1),$ where $\theta\in[-\frac{\pi}{2},\frac{\pi}{2})$ denotes physical DoA. Noise, $\bar{\mathbf{n}}_l\in\mathbb{C}^N$, is distributed as $\mathcal{CN}(\mathbf{0},\sigma_n^2\mathbf{I})$ and i.i.d. over time. 
Since there is a single RF chain, the received signal is processed using an analog combiner, $\mathbf{w}_l\in\mathbb{C}^N$. The output, $y_l\in\mathbb{C}$, available for inference is given by\begin{IEEEeqnarray}{ll}   y_l=\mathbf{w}_l^H\mathbf{x}_l&=\sqrt{P_s}\alpha\mathbf{w}_l^H\bm{\phi}_N(u)+\mathbf{w}_l^H\bar{\mathbf{n}}_l\nonumber\\
&=\sqrt{P_s}\alpha\mathbf{w}_l^H\bm{\phi}_N(u)+{n}_l.\label{eq:scalarmeas}
\end{IEEEeqnarray}The goal is to design $\mathbf{w}_l$ and infer $u$; $\mathbf{w}_l$ can be adapted over time to improve the inference. For a ULA with $\lambda/2$ inter-element spacing\footnote{$\lambda/2$ inter-element spacing prevents ambiguity in angular estimation.}, where $\lambda$ denotes the wavelength of the received signal, we have\begin{equation}
    \bm{\phi}_N(u)=[1\>\exp{(j\pi u)}\>\cdots\>\exp{(j\pi(N-1)u)}]^T.
\end{equation}In this paper, we consider the ULA geometry for easier exposition of ideas. However, these ideas can be extended to planar geometries as well, such as uniform rectangular arrays\cite{trees02}. Next, we describe a high-level structure that we impose when designing the beamformer, $\mathbf{w}_l$.
\subsection{Synthesis of Virtual Array Manifold (SVAM) Sensing}
The sensing methodology is inspired from SAR systems used in remote sensing and automotive radar (see \cite{moreira13,waldschmidt21} and references within them). In typical SAR systems, the sensor motion allows synthesis of larger aperture than physical antenna, which helps to improve resolution. In this work, we mimic the sensor motion by designing $\mathbf{w}_l$ appropriately. We exploit the coherence interval to synthesize virtual apertures \emph{over time}. An important consequence
is that, such measurements preserve phase information from the physical antenna, which captures rich information about the DoA of the incoming signal. The proposed sensing can be applied more broadly to multi-path angles. Moreover, leveraging the complex exponential structure present at the combiner output $y_l,l\in[L]$, it is possible to apply an \emph{unlimited} number of digital filters on these measurements.
\subsubsection{Constructing a Virtual ULA with $\lambda/2$ Inter-element Spacing}\label{sec:svamula}

Let $N_v$ denote the number of antenna elements in the \emph{virtual} ULA we wish to create. We assume that the total training duration, $L$, is divisible by $N_v$ for simplicity. Let $t(l)=\mathrm{floor}(l/N_v)$ denote\footnote{ $\mathrm{floor}(\cdot)$ denotes the floor function.} the (ULA) segment index (see Fig.~\ref{fig:numerology}). We design a beamformer of size $M=N-N_v+1$, $\mathbf{f}_{t(l)}\in\mathbb{C}^{M}$, such that $\lVert\mathbf{f}_{t(l)}\rVert_2=1$, initially spanning the region of interest (RoI). A RoI incorporates any prior information available about the DoA. In certain scenarios the receiver may be interested in identifying paths in a narrow region, for example, due to restricted mobility patterns for the transmitter. As the system gathers information about the unknown DoA, $u$, the design of the beamformer, $\mathbf{f}_{t(l)}$, is adapted. The analog combiner at instant $l$ is given by\begin{equation}
    \mathbf{w}_l=\left[\begin{array}{ccc}\mathbf{0}^T_{\mathrm{mod}(l,N_v)} & \mathbf{f}_{t(l)}^T& \mathbf{0}_{N_v-\mathrm{mod}(l,N_v)-1}^T\end{array}\right]^T.\label{eq:wlstructure}
\end{equation}Thus, within a segment duration, the beamformer slides along the antenna aperture.
In contrast to the work in \cite{pote19}, here only a single RF chain is available and thus a virtual ULA segment is synthesized over time. Let\begin{equation}\beta_{t(l)}(u)=\mathbf{f}_{t(l)}^H\bm{\phi}_{M}(u),\label{eq:betadef}\end{equation}denote the complex gain of the beamformer along the angle $u$. The signal $y_l$ post-combining can be expressed as\begin{equation}
    y_l=\mathbf{w}_l^H\mathbf{x}_l=\sqrt{P_s}\alpha\beta_{t(l)}(u)\cdot\exp{\left(j\pi u\mathrm{mod}(l,N_v)\right)}+n_l.\label{eq:scalarmeas_propsensing}
\end{equation}Note that the gain $\beta_{t(l)}(u)$ does not change within a segment, but `$\exp{\left(j\pi u\mathrm{mod}(l,N_v)\right)}$' varies within the segment.\begin{remark}
    The beamforming gain, measured in terms of $\vert\beta_{t(l)}(u)\vert^2$, in the passband depends primarily on the beamwidth of the beamformer, $\mathbf{f}_{t(l)}$. For an ideal beamformer design, the gain in the beamformer passband corresponding to a beamwidth of $\frac{2}{R},R\geq 1,$ in $u$-space is given by $\vert\beta_{t(l)}(u)\vert^2=R$. As the beamformer size, $M$, increases the beamformer response approaches the ideal response.
\end{remark}
We drop the notation for dependence of $t$ on $l$ for simplicity. We stack the measurements within a segment duration to form $\mathbf{y}_t=\left[
y_{tN_v} \>\> y_{tN_v+1} \>\> \cdots \>\> y_{(t+1)N_v-1}\right]^T\in\mathbb{C}^{N_v}, t\in[L/N_v]$,
\begin{IEEEeqnarray}{ll}
    \mathbf{y}_t
    &=\sqrt{P_s}\alpha\beta_{t}(u)\left[
         1\>
         \exp{(j\pi u)}
         \cdots
         \exp{(j\pi (N_v-1) u)}\right]^T+\mathbf{n}_t\nonumber\\
    &=\sqrt{P_s}\alpha\beta_{t}(u)\bm{\phi}_{N_v}(u)+\mathbf{n}_t.\label{eq:virtualmeas}
\end{IEEEeqnarray}We identify the following design parameters: a) $N_v\in\{1,2,\ldots,N\}$, the virtual ULA size, and b) beamformer, $\mathbf{f}_t$, design, which includes the beam direction and beamwidth. $N_v=1$ reduces to the conventional beam design.
Thus, the proposed sensing strategy includes the methodology adopted for sensing in \cite{alkhateeb14, chiu19} as a special case. The hierarchical codebook in \cite{alkhateeb14} designed using a least squared error criterion imposes a constant amplitude and phase in the passband. The inference is improved by relaxing the constant phase requirement in the passband. Thus, in this work we design the beamformers as \emph{linear}-phase Finite Impulse Response (FIR) filter using the Parks-McClellan algorithm\cite{oppenheim09}.

Let $\tilde{\mathbf{y}}_t=[\mathbf{y}_0^T,\ldots,\mathbf{y}_{t}^T]^T\in\mathbb{C}^{(t+1)N_v}$ denote the measurements until snapshot index, $l=(t+1)N_v-1$. Then\begin{IEEEeqnarray}{ll}
    \tilde{\mathbf{y}}_t=\left[\begin{array}{c}\mathbf{y}_0\\\mathbf{y}_1\\\vdots\\\mathbf{y}_t\end{array}\right]&=\sqrt{P_s}\alpha\left[\begin{array}{c}
       \beta_0(u)\bm{\phi}_{N_v}(u)\\
       \beta_1(u)\bm{\phi}_{N_v}(u)\\
       \vdots\\
       \beta_t(u)\bm{\phi}_{N_v}(u)
    \end{array}\right]+\left[\begin{array}{c}
        \mathbf{n}_0\\
        \mathbf{n}_1\\
        \vdots\\
        \mathbf{n}_t
    \end{array}\right]\nonumber\\
    &=\sqrt{P_s}\alpha\left(\tilde{\bm{\beta}}_t(u)\otimes\bm{\phi}_{N_v}(u)\right)+\tilde{\mathbf{n}}_t,\IEEEeqnarraynumspace\label{eq:virtualmeasstack}
\end{IEEEeqnarray}where $\tilde{\bm{\beta}}_t(u)=\left[\beta_0(u),\beta_1(u),\ldots,\beta_t(u)
\right]^T$ and $\mathbf{\tilde{n}}_t=[\mathbf{n}_0^T,\mathbf{n}_1^T,\ldots,\mathbf{n}_t^T]^T$. In Section~\ref{sec:alphaunknown} we discuss how to estimate $u$ along with $\alpha$ using $\tilde{\mathbf{y}}_t$, and how to adaptively design the beamformer i.e., $\mathbf{f}_{t+1}$ for the next segment.\begin{remark}
    It is important to highlight the significance of the proposed sensing methodology. Given just two measurements, it is possible to construct a virtual ULA under the proposed sensing with $N_v=2$. This is equivalent to a contrived \emph{single} snapshot measurement from a physical array of size $2$. Owing to the rich (array) geometrical information preserved in the measurements, it is thus possible to \emph{estimate} the dominant path DoA in a \emph{gridless} manner using existing techniques\cite{pote23doa}. In contrast, the beam scan operation using $N_v=1$ requires as many measurements as the codebook size to \emph{detect} the DoA.\end{remark} 
\subsubsection{Constructing a Virtual Sparse Linear Array}
The construction presented in the previous subsection can be extended to form virtual ULAs with more than $\lambda/2$ spacing\footnote{Any ambiguity in angular estimation can be resolved if the RoI is an appropriate fraction of the spatial region.}. More generally, a Sparse Linear Array (SLA) can also be realized as the virtual array geometry, for example, minimum redundancy arrays\cite{moffet68}, nested arrays\cite{pal10} or co-prime arrays\cite{vaidyanathan11}. These can help to increase the virtual aperture and improve resolution for the same segment duration. Let $N_v$ denote the number of antenna elements in the virtual SLA we wish to construct over time. Let $\mathbb{P}=\{P_i:0\leq P_i<N,P_i\in\mathbb{Z},i\in [N_v]\}$ denote the set of sensor positions in the SLA ordered in an increasing manner; $P_0=0$, without loss of generality. We design a beamformer, $\mathbf{f}_{t}$, of length $M=N-P_{N_v-1}$. The analog combiner at time $l$, in the case of SLA, is given by\begin{equation}
    \mathbf{w}_l=\left[\begin{array}{ccc}
    \mathbf{0}_{P_{\mathrm{mod}(l,N_v)}}^T & \mathbf{f}_{t}^T & \mathbf{0}_{N-M-P_{\mathrm{mod}(l,N_v)}}^T
    \end{array}\right]^T.
\end{equation}Using identical notation to describe the complex gain, $\beta_{t}(u)$, as in (\ref{eq:betadef}), the signal $y_l$ post-combining can be expressed as\begin{equation}
    y_l=\mathbf{w}_l^H\mathbf{x}_l=\sqrt{P_s}\alpha\beta_{t}(u)\cdot\exp{\left(j\pi uP_{\mathrm{mod}(l,N_v)}\right)}+n_l.\label{eq:scalarmeas_propsensing_sla}
\end{equation}Note that (\ref{eq:scalarmeas_propsensing_sla}) generalizes (\ref{eq:scalarmeas_propsensing}) for the SLA case. Finally, the measurements within the $t$-th SLA segment can be stacked as\begin{IEEEeqnarray}{ll}
\mathbf{y}_t&=\sqrt{P_s}\alpha\beta_t(u)[1\>\exp{(j\pi P_1u)}\cdots\exp{(j\pi P_{N_v-1}u)}]^T+\mathbf{n}_t\nonumber\\
&=\sqrt{P_s}\alpha\beta_t(u)\mathbf{S}_{\mathbb{P}}\bm{\phi}_N(u)+\mathbf{n}_t,
\end{IEEEeqnarray}where $\mathbf{S}_{\mathbb{P}}\in\mathbb{R}^{N_v\times N}$ is a binary sampling matrix
given by\begin{equation}
    [\mathbf{S}_{\mathbb{P}}]_{m,n}=\left\{\begin{array}{cc}
       1  &  \mbox{if }n=P_m\\
       0  & \mbox{otherwise}
    \end{array}\right.,m\in [N_v],n\in [N].
\end{equation}In the remainder of the work, we focus on the virtual ULA with $\lambda/2$ spacing-based sensing for ease of exposition, but the ideas presented can be easily extended to the virtual SLA case.

We refer to the  sensing methodology described in this section as Synthesis of Virtual Array Manifold (SVAM). Furthermore, we describe the SVAM sensing in conjunction with the virtual ULA size (with $\lambda/2$ spacing) by SVAM-$N_v$. For example, SVAM-$2$ indicates the sensing methodology employed is as described in subsection~\ref{sec:svamula} with $N_v=2$.
\subsection{Implications of Using SVAM sensing to multipath channels}The proposed SVAM sensing approach can be applied to scenarios including multiple paths. In this subsection, we briefly take a detour from (\ref{eq:aesignal}) and consider the following more general measurement model corresponding to a channel with $K$ paths ($K\geq 1$)\begin{equation}
    \mathbf{x}_l=\sqrt{P_s}\sum_{k=1}^K\alpha_k\bm{\phi}_N(u_k)+\bar{\mathbf{n}}_l,\quad l\in [L],\label{eq:aesignalmultipath}
\end{equation}where the notations simply extend for $K$ paths compared to (\ref{eq:aesignal}) and $P_s$ denotes the average power value for the $K$ paths. A similar development as shown in Subsection~\ref{sec:svamula} leads to the following received measurement vector after beamforming in the $t$-th virtual ULA segment duration\begin{equation}
\mathbf{y}_t=\sqrt{P_s}\sum_{k=1}^K\alpha_k\beta_t(u_k)\bm{\phi}_{N_v}(u_k)+\mathbf{n}_t\end{equation}compared to (\ref{eq:virtualmeas}). The combined measurements after $(t+1)N_v$ snapshots is given by\begin{IEEEeqnarray}{ll}
    \tilde{\mathbf{y}}_t=\left[\begin{array}{c}\mathbf{y}_0\\\mathbf{y}_1\\\vdots\\\mathbf{y}_t\end{array}\right]&=\sqrt{P_s}\sum_{k=1}^K\alpha_k\left(\tilde{\bm{\beta}}_t(u_k)\otimes\bm{\phi}_{N_v}(u_k)\right)+\tilde{\mathbf{n}}_t,\IEEEeqnarraynumspace\label{eq:virtualmeasstackKpaths}
\end{IEEEeqnarray}compared to (\ref{eq:virtualmeasstack}), where $\tilde{\bm{\beta}}_t(u)=\left[\beta_0(u),\beta_1(u),\ldots,\beta_t(u_k)
\right]^T$ and $\mathbf{\tilde{n}}_t=[\mathbf{n}_0^T,\mathbf{n}_1^T,\ldots,\mathbf{n}_t^T]^T$. The beamforming gain for each path is a function of the beamformer design $\mathbf{f}_t$ and the path angle. Thus, all the paths within the passband of the beamformer $\mathbf{f}_t$
are boosted. This demonstrates the applicability of the proposed SVAM sensing for more general channel models beyond the scenario considered in (\ref{eq:aesignal}). The remainder of this work specializes to the single dominant path model in (\ref{eq:aesignal}) for tractability and ease of exposition. The more general scenarios involving multipaths (such as \cite{va16,song19,yang20,khordad23}) is left as future work. 
\section{Benefits of SVAM For Beam Alignment
}\label{sec:alphaknown}
We discuss some of the benefits of the proposed SVAM sensing for estimating the DoA, $u$. We demonstrate the benefits in two settings: i. agnostic to the adaptive scheme used, ii. when hierarchical posterior matching (hiePM)\cite{chiu19} scheme is used. In both settings we assume that $\alpha$ is known. We also briefly discuss the role of virtual antenna size $N_v$
for improving the estimation performance. The more practical scenario, where $\alpha$ is unknown, is discussed in the next section.

\subsection{Adaptive Scheme-Agnostic Analysis}\label{sec:alphaknownCRBbenchscheme}
 We begin by first deriving the CRB after $L$ snapshots, assuming $\alpha$ is known. Let $\mathbf{W}=[\mathbf{w}_0,\ldots,\mathbf{w}_{L-1}]\in\mathbb{C}^{N\times L},\lVert\mathbf{w}_l\rVert_2=1$ denote the matrix of beamformers used to generate measurements $\mathbf{y}=[y_0,\ldots,y_{L-1}]^T\in\mathbb{C}^L$ as in (\ref{eq:scalarmeas}). Note that $\mathbf{w}_l$'s may be designed generally, and not necessarily under SVAM for the following result to hold.
 \begin{theorem}
     The $\mathrm{CRB}(u)$ on the variance for estimating $u$ using the beamformer matrix $\mathbf{W}=[\mathbf{w}_0,\ldots,\mathbf{w}_{L-1}],\lVert\mathbf{w}_l\rVert_2=1$ as in (\ref{eq:scalarmeas}) over $L$ snapshots when $\alpha$ is known is given by\begin{equation}
         \mathrm{CRB}(u)=\frac{\sigma_n^2}{2P_s\vert\alpha\vert^2}\left\{\left(\frac{\partial}{\partial u}\bm{\phi}_N(u)\right)^H\mathbf{W}\mathbf{W}^H\frac{\partial}{\partial u}\bm{\phi}_N(u)\right\}^{-1}.\label{eq:crbalphaknowngeneral}\end{equation}\label{thm:CRBalphaknown}
 \end{theorem}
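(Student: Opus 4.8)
The plan is to derive the Cram\'er-Rao bound directly from the Fisher Information for the single real parameter $u$, treating $\alpha$ and the noise variance $\sigma_n^2$ as known. First I would write down the statistics of the measurement vector $\mathbf{y}=[y_0,\ldots,y_{L-1}]^T$. From (\ref{eq:scalarmeas}), each $y_l$ is complex Gaussian with mean $m_l(u)=\sqrt{P_s}\alpha\,\mathbf{w}_l^H\bm{\phi}_N(u)$ and covariance $\sigma_n^2$ (the noise $n_l=\mathbf{w}_l^H\bar{\mathbf{n}}_l$ has variance $\sigma_n^2\lVert\mathbf{w}_l\rVert_2^2=\sigma_n^2$ since $\lVert\mathbf{w}_l\rVert_2=1$, and the $n_l$ are independent across $l$ because $\bar{\mathbf{n}}_l$ is i.i.d.\ over time). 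Stacking, $\mathbf{y}\sim\mathcal{CN}(\mathbf{m}(u),\sigma_n^2\mathbf{I})$ with $\mathbf{m}(u)=\sqrt{P_s}\alpha\,\mathbf{W}^H\bm{\phi}_N(u)$.

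The key step is to invoke the Fisher Information formula for a complex Gaussian observation whose covariance does not depend on the parameter. For a single real parameter $u$ with parameter-independent covariance $\mathbf{C}=\sigma_n^2\mathbf{I}$, the Fisher Information is $J(u)=\frac{2}{\sigma_n^2}\,\mathrm{Re}\!\left\{\left(\frac{\partial \mathbf{m}}{\partial u}\right)^H\frac{\partial \mathbf{m}}{\partial u}\right\}$. I would then compute $\frac{\partial \mathbf{m}}{\partial u}=\sqrt{P_s}\alpha\,\mathbf{W}^H\frac{\partial}{\partial u}\bm{\phi}_N(u)$, so that
\begin{equation}
\left(\frac{\partial \mathbf{m}}{\partial u}\right)^H\frac{\partial \mathbf{m}}{\partial u}=P_s\vert\alpha\vert^2\left(\frac{\partial}{\partial u}\bm{\phi}_N(u)\right)^H\mathbf{W}\mathbf{W}^H\frac{\partial}{\partial u}\bm{\phi}_N(u).\nonumber
\end{equation}
Here the scalar $\vert\alpha\vert^2$ comes out of the conjugate product, and the quadratic form $(\partial_u\bm{\phi}_N)^H\mathbf{W}\mathbf{W}^H\partial_u\bm{\phi}_N$ is manifestly real and nonnegative, so the $\mathrm{Re}\{\cdot\}$ is automatically satisfied and can be dropped. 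Substituting gives $J(u)=\frac{2P_s\vert\alpha\vert^2}{\sigma_n^2}(\partial_u\bm{\phi}_N)^H\mathbf{W}\mathbf{W}^H\partial_u\bm{\phi}_N$, and $\mathrm{CRB}(u)=J(u)^{-1}$ yields exactly (\ref{eq:crbalphaknowngeneral}).

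The main obstacle, such as it is, is bookkeeping rather than conceptual: I must confirm that the observation covariance is genuinely independent of $u$ (true here, since only the mean carries the angle) so that the simpler Fisher Information formula applies without the extra covariance-derivative trace term, and I must verify that the quadratic form is real so the $\mathrm{Re}\{\cdot\}$ drops cleanly. I would also note that the derivation nowhere uses the SVAM structure of $\mathbf{w}_l$ --- only the unit-norm constraint $\lVert\mathbf{w}_l\rVert_2=1$ enters, through the noise normalization --- which is precisely why the theorem holds for arbitrary beamformers as the statement claims. A subtlety worth flagging is that the phase of $\alpha$ is immaterial and only $\vert\alpha\vert^2$ survives, consistent with $\alpha$ acting as a known complex gain on the mean.
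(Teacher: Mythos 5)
Your proposal is correct and follows essentially the same route as the paper: the paper's proof (deferred to the cited dissertation) likewise models $\mathbf{y}\sim\mathcal{CN}(\sqrt{P_s}\alpha\,\mathbf{W}^H\bm{\phi}_N(u),\sigma_n^2\mathbf{I})$ and obtains the Fisher information $\frac{2P_s\vert\alpha\vert^2}{\sigma_n^2}\left(\frac{\partial}{\partial u}\bm{\phi}_N(u)\right)^H\mathbf{W}\mathbf{W}^H\frac{\partial}{\partial u}\bm{\phi}_N(u)$, the only cosmetic difference being that the paper computes $E[-\partial_u^2\ln f]$ directly whereas you invoke the equivalent mean-derivative (Slepian--Bangs) formula for a parameter-independent covariance. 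Your observations about the unit-norm constraint normalizing the noise and the argument being agnostic to the SVAM structure are both accurate.
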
\begin{proof}
     The proof follows standard steps for deriving CRB and provided in \cite{pote23dissert}. 
 \end{proof}In this subsection, we compare two sensing strategies - both deterministically modify the beamformer design after every $N_v$ snapshots. The proposed sensing strategy involves a shift in space as described in (\ref{eq:wlstructure}), and requires the beamformer to be fixed for $N_v$ snapshots by design. The alternative (\emph{benchmark}) strategy designs the beamformer $\mathbf{w}_l$ of size $N$ without inserting $0$'s - used in SVAM to effect a \emph{linear time invariant} operation. The beamformer for the two strategies considered here are to be designed with identical specifications except the length. The SVAM beamformer $\mathbf{f}_t$ has a size of $M (=N-N_v+1)$, whereas the alternative strategy utilizes the total antenna aperture of size $N$. We expect that for a large antenna size $N$, which is typical in mmWave systems, the slightly different length of the beamformers will have negligible impact. The rest of the beamformer specifications, which can change every $N_v$ snapshots, may be chosen arbitrarily for the discussion in this subsection. We defer the discussion that involves using the specific adaptive scheme - HiePM\cite{chiu19} to the subsection~\ref{sec:numsecalphaknown}. We specialize the CRB expression in Theorem~\ref{thm:CRBalphaknown} for the benchmark strategy and the proposed sensing strategy. This exercise helps to understand: \emph{how informative are the measurements available post analog combining about the unknown DoA using either of the two techniques for designing the analog combiners?} For both the cases, we treat as if the same ordered set of $L$ received signal snapshots $\mathbf{x}_l\in\mathbb{C}^N,l\in[L]$ were available at the antenna.

\subsubsection{CRB for the Benchmark Strategy}Let $\mathbf{W}^\mathrm{B}=[\mathbf{w}^{\mathrm{B}}_0,\mathbf{w}^{\mathrm{B}}_1,\ldots,\mathbf{w}^{\mathrm{B}}_{L-1}]\in\mathbb{C}^{N\times L}$ be the $L$ beamformers  used to generate the measurements as in (\ref{eq:scalarmeas}); superscript $\mathrm{B}$ highlights the benchmark sensing strategy. Note that\begin{equation}
\mathbf{w}^{\mathrm{B}}_l=\mathbf{w}^{\mathrm{B}}_{N_v\times\mathrm{floor}(l/N_v)},\label{eq:benchmkconstraint}
\end{equation}under the benchmark scheme. Let $\mathbf{F}^{\mathrm{B}}=[\mathbf{w}^{\mathbf{B}}_0,\mathbf{w}^{\mathbf{B}}_{N_v},\ldots,\mathbf{w}^{\mathbf{B}}_{L-N_v}]\in\mathbb{C}^{N\times\frac{L}{N_v}}$ denote the matrix of unique beamformers from $\mathbf{W}^{\mathrm{B}}$. We have the following result.\begin{corollary}\label{corr:benchsens}
     The $\mathrm{CRB}(u)$ using measurements in (\ref{eq:scalarmeas}) from the beamformer matrix $\mathbf{W}^{\mathrm{B}}$
     under the constraint in (\ref{eq:benchmkconstraint}) over $L$ snapshots when $\alpha$ is known is given by\begin{IEEEeqnarray}{ll}
         &\mathrm{CRB}(u)\nonumber\\
         &=\frac{1}{N_v}\frac{\sigma_n^2}{2P_s\vert\alpha\vert^2}\left\{\left(\frac{\partial}{\partial u}\bm{\phi}_N(u)\right)^H\mathbf{F}^{\mathrm{B}}(\mathbf{F}^{\mathrm{B}})^H\frac{\partial}{\partial u}\bm{\phi}_N(u)\right\}^{-1}.\label{eq:crbbenchsens}
     \end{IEEEeqnarray}
\end{corollary}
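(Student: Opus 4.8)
The plan is to invoke Theorem~\ref{thm:CRBalphaknown} directly and then exploit the repetition structure encoded in (\ref{eq:benchmkconstraint}) to rewrite the Gram matrix $\mathbf{W}^{\mathrm{B}}(\mathbf{W}^{\mathrm{B}})^H$ in terms of the unique beamformers collected in $\mathbf{F}^{\mathrm{B}}$. Since the general CRB formula in Theorem~\ref{thm:CRBalphaknown} holds for \emph{any} beamformer matrix with unit-norm columns, and the benchmark beamformers satisfy $\lVert\mathbf{w}^{\mathrm{B}}_l\rVert_2=1$, the theorem applies verbatim with $\mathbf{W}=\mathbf{W}^{\mathrm{B}}$. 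It therefore suffices to simplify the term appearing inside the braces.

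First I would expand the Gram matrix as a sum of rank-one outer products, $\mathbf{W}^{\mathrm{B}}(\mathbf{W}^{\mathrm{B}})^H=\sum_{l=0}^{L-1}\mathbf{w}^{\mathrm{B}}_l(\mathbf{w}^{\mathrm{B}}_l)^H$. The key step is to observe that the constraint (\ref{eq:benchmkconstraint}) forces each unique beamformer $\mathbf{w}^{\mathrm{B}}_{N_v t}$, $t\in[L/N_v]$, to be repeated for exactly $N_v$ consecutive snapshot indices. Grouping the $L$ terms in the sum into $L/N_v$ blocks of $N_v$ identical summands each, I would rewrite
\begin{equation}
\mathbf{W}^{\mathrm{B}}(\mathbf{W}^{\mathrm{B}})^H=\sum_{t=0}^{L/N_v-1}\sum_{j=0}^{N_v-1}\mathbf{w}^{\mathrm{B}}_{N_v t}(\mathbf{w}^{\mathrm{B}}_{N_v t})^H=N_v\sum_{t=0}^{L/N_v-1}\mathbf{w}^{\mathrm{B}}_{N_v t}(\mathbf{w}^{\mathrm{B}}_{N_v t})^H=N_v\,\mathbf{F}^{\mathrm{B}}(\mathbf{F}^{\mathrm{B}})^H,
\end{equation}
where the last equality is just the definition of $\mathbf{F}^{\mathrm{B}}$ as the matrix whose columns are these unique beamformers.

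Finally I would substitute $\mathbf{W}^{\mathrm{B}}(\mathbf{W}^{\mathrm{B}})^H=N_v\mathbf{F}^{\mathrm{B}}(\mathbf{F}^{\mathrm{B}})^H$ into the expression from Theorem~\ref{thm:CRBalphaknown}; because the bracketed quantity is a scalar quadratic form, the factor $N_v$ pulls out of the inverse as $1/N_v$, yielding (\ref{eq:crbbenchsens}) exactly. There is no genuine obstacle here, as the argument is a direct specialization. The only point requiring care is the bookkeeping in the grouping step: one must verify that the index map in (\ref{eq:benchmkconstraint}) really does partition the snapshots into $L/N_v$ groups of size $N_v$, each carrying a single distinct beamformer, so that the multiplicity is uniformly $N_v$. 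This uses the divisibility of $L$ by $N_v$ assumed earlier; once it is confirmed, the scalar factorization through the matrix inverse is immediate.
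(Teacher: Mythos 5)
Your proposal is correct and follows essentially the same route as the paper: both invoke Theorem~\ref{thm:CRBalphaknown} with $\mathbf{W}=\mathbf{W}^{\mathrm{B}}$ and exploit the $N_v$-fold repetition in (\ref{eq:benchmkconstraint}) to show the quadratic form equals $N_v$ times the one built from $\mathbf{F}^{\mathrm{B}}$. The paper phrases this by writing $(\mathbf{W}^{\mathrm{B}})^H=(\mathbf{F}^{\mathrm{B}})^H\otimes\mathbf{1}_{N_v}$ and applying the Kronecker mixed-product property, whereas you group the rank-one terms of the Gram matrix into $L/N_v$ blocks of $N_v$ identical summands; these are notationally different but mathematically identical steps.
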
\begin{proof}
Proof follows from simplifying (\ref{eq:crbalphaknowngeneral}) using (\ref{eq:benchmkconstraint}), and is provided in \cite{pote23dissert}.
\end{proof}

\subsubsection{CRB for the Proposed Sensing Strategy}Let $\mathbf{W}^\mathrm{P}=[\mathbf{w}^{\mathrm{P}}_0,\mathbf{w}^{\mathrm{P}}_1,\ldots,\mathbf{w}^{\mathrm{P}}_{L-1}]\in\mathbb{C}^{N\times L}$ be the $L$ beamformers designed under SVAM as described in (\ref{eq:wlstructure}); superscript $\mathrm{P}$ highlights the proposed SVAM sensing. Let $\mathbf{F}^{\mathrm{P}}=[\mathbf{f}_0,\mathbf{f}_1,\ldots,\mathbf{f}_{L/N_v-1}]\in\mathbb{C}^{M\times\frac{L}{N_v}}$ denote the matrix of SVAM beamformers
. 
We have the following result.\begin{corollary}\label{corr:svamsens}
     The $\mathrm{CRB}(u)$  using measurements in (\ref{eq:scalarmeas}) from the beamformer matrix $\mathbf{W}^{\mathrm{P}}$
     under the construction in (\ref{eq:wlstructure}) over $L$ snapshots when $\alpha$ is known is given by\begin{IEEEeqnarray}{ll}
         \mathrm{CRB}(u)&=\frac{1}{N_v}\frac{\sigma_n^2}{2P_s\vert\alpha\vert^2}\left\{\left(\frac{\partial}{\partial u}\bm{\phi}_M(u)\right)^H\mathbf{F}^{\mathrm{P}}(\mathbf{F}^{\mathrm{P}})^H\frac{\partial}{\partial u}\bm{\phi}_M(u)\right.\nonumber\\
         &\quad+\Biggl.G\Biggr\}^{-1},\mbox{ where}\label{eq:crbsvamsens}
     \end{IEEEeqnarray}\begin{IEEEeqnarray}{ll}G&=\frac{\pi^2(N_v-1)(2N_v-1)}{6}\bm{\phi}_{M}(u)^H\mathbf{F}^{\mathrm{P}}(\mathbf{F}^{\mathrm{P}})^H\bm{\phi}_{M}(u)\nonumber\\
     &\quad-\pi(N_v-1)\mathrm{Im}\left\{\left(\frac{\partial}{\partial u}\bm{\phi}_M(u)\right)^H\mathbf{F}^{\mathrm{P}}(\mathbf{F}^{\mathrm{P}})^H\bm{\phi}_M(u)\right\}\IEEEeqnarraynumspace\label{eq:Gexp}\end{IEEEeqnarray}
\end{corollary}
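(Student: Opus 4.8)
The plan is to specialize the general expression (\ref{eq:crbalphaknowngeneral}) of Theorem~\ref{thm:CRBalphaknown} by evaluating the single scalar quadratic form $\left(\frac{\partial}{\partial u}\bm{\phi}_N(u)\right)^H\mathbf{W}^{\mathrm{P}}(\mathbf{W}^{\mathrm{P}})^H\frac{\partial}{\partial u}\bm{\phi}_N(u)$ under the shift construction (\ref{eq:wlstructure}). First I would write $\mathbf{W}^{\mathrm{P}}(\mathbf{W}^{\mathrm{P}})^H=\sum_{l=0}^{L-1}\mathbf{w}_l^{\mathrm{P}}(\mathbf{w}_l^{\mathrm{P}})^H$ and, since each $\mathbf{w}_l^{\mathrm{P}}$ is the length-$M$ beamformer $\mathbf{f}_t$ zero-padded and shifted down by $r=\mathrm{mod}(l,N_v)$ positions, express $\mathbf{w}_l^{\mathrm{P}}=\mathbf{Z}_r\mathbf{f}_t$ for the $N\times M$ shift (selection) matrix $\mathbf{Z}_r$ with $[\mathbf{Z}_r]_{m,n}=1$ iff $m=n+r$. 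Because $\mathbf{f}_t\mathbf{f}_t^H$ is rank one, the quadratic form collapses into a sum of squared magnitudes $\sum_{t=0}^{L/N_v-1}\sum_{r=0}^{N_v-1}\bigl\lvert\mathbf{f}_t^H\mathbf{Z}_r^H\frac{\partial}{\partial u}\bm{\phi}_N(u)\bigr\rvert^2$, where the outer sum is over segments and the inner sum over the $N_v$ shifts within each segment.

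The crux of the argument is a clean decomposition of $\mathbf{Z}_r^H\frac{\partial}{\partial u}\bm{\phi}_N(u)$, the length-$M$ block of the manifold derivative starting at index $r$. Writing $\frac{\partial}{\partial u}\bm{\phi}_N(u)=j\pi\,\mathrm{diag}(0,\ldots,N-1)\,\bm{\phi}_N(u)$ entrywise and substituting the global index $n=m+r$ for the extracted block, I would verify
\[
\mathbf{Z}_r^H\frac{\partial}{\partial u}\bm{\phi}_N(u)=e^{j\pi r u}\left(\frac{\partial}{\partial u}\bm{\phi}_M(u)+j\pi r\,\bm{\phi}_M(u)\right),
\]
so the index shift simultaneously produces a unit-modulus phase $e^{j\pi r u}$ and an additive correction $j\pi r\,\bm{\phi}_M(u)$ stemming from the split $m+r$. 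Applying $\mathbf{f}_t^H$ and abbreviating $a_t=\mathbf{f}_t^H\frac{\partial}{\partial u}\bm{\phi}_M(u)$ and $b_t=\mathbf{f}_t^H\bm{\phi}_M(u)=\beta_t(u)$ from (\ref{eq:betadef}), the summand becomes $\lvert a_t+j\pi r\,b_t\rvert^2=\lvert a_t\rvert^2+\pi^2 r^2\lvert b_t\rvert^2-2\pi r\,\mathrm{Im}\{a_t^c b_t\}$, the phase dropping out of the modulus.

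I would then sum over $r$ using $\sum_{r=0}^{N_v-1}r=\frac{N_v(N_v-1)}{2}$ and $\sum_{r=0}^{N_v-1}r^2=\frac{N_v(N_v-1)(2N_v-1)}{6}$, turning the three terms into $N_v\lvert a_t\rvert^2$, $\frac{\pi^2 N_v(N_v-1)(2N_v-1)}{6}\lvert b_t\rvert^2$, and $-\pi N_v(N_v-1)\,\mathrm{Im}\{a_t^c b_t\}$. Summing over $t$ and recognizing $\sum_t\lvert a_t\rvert^2$, $\sum_t\lvert b_t\rvert^2$, and $\sum_t a_t^c b_t$ as the quadratic forms $\left(\frac{\partial}{\partial u}\bm{\phi}_M(u)\right)^H\mathbf{F}^{\mathrm{P}}(\mathbf{F}^{\mathrm{P}})^H\frac{\partial}{\partial u}\bm{\phi}_M(u)$, $\bm{\phi}_M(u)^H\mathbf{F}^{\mathrm{P}}(\mathbf{F}^{\mathrm{P}})^H\bm{\phi}_M(u)$, and $\left(\frac{\partial}{\partial u}\bm{\phi}_M(u)\right)^H\mathbf{F}^{\mathrm{P}}(\mathbf{F}^{\mathrm{P}})^H\bm{\phi}_M(u)$ respectively, I factor out the common $N_v$. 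Substituting the result back into (\ref{eq:crbalphaknowngeneral}) yields (\ref{eq:crbsvamsens}) with $G$ exactly as in (\ref{eq:Gexp}).

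The step I expect to be the main obstacle, and the one genuinely responsible for the extra term $G$ relative to the benchmark Corollary~\ref{corr:benchsens}, is establishing the decomposition above and correctly tracking the cross term through conjugation. A naive specialization would reproduce only the first quadratic form; the $j\pi r\,\bm{\phi}_M(u)$ correction, which reflects that differentiating the \emph{full}-aperture manifold $\bm{\phi}_N$ is not the same as differentiating the shifted \emph{sub}-aperture manifold $\bm{\phi}_M$, is precisely what generates both the $\lvert b_t\rvert^2$ and the $\mathrm{Im}\{\cdot\}$ contributions in $G$. Some care is also needed to confirm that $\lVert\mathbf{w}_l^{\mathrm{P}}\rVert_2=\lVert\mathbf{f}_t\rVert_2=1$, so that the normalization hypothesis of Theorem~\ref{thm:CRBalphaknown} is satisfied by the zero-padding-and-shift construction.
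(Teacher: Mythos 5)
Your proposal is correct and follows essentially the same route as the paper: specialize the quadratic form in Theorem~\ref{thm:CRBalphaknown} under the shift construction (\ref{eq:wlstructure}), using the key observation that the length-$M$ block of $\frac{\partial}{\partial u}\bm{\phi}_N(u)$ starting at offset $r$ equals $e^{j\pi ru}\left(\frac{\partial}{\partial u}\bm{\phi}_M(u)+j\pi r\,\bm{\phi}_M(u)\right)$, then sum $r^0,r,r^2$ over the $N_v$ shifts. The paper packages exactly this computation in Kronecker-product form (writing $(\mathbf{W}^{\mathrm{P}})^H\frac{\partial}{\partial u}\bm{\phi}_N(u)$ as $\left((\mathbf{F}^{\mathrm{P}})^H\frac{\partial}{\partial u}\bm{\phi}_M(u)\right)\otimes\bm{\phi}_{N_v}(u)+\left((\mathbf{F}^{\mathrm{P}})^H\bm{\phi}_M(u)\right)\otimes\frac{\partial}{\partial u}\bm{\phi}_{N_v}(u)$ and using $\lVert\bm{\phi}_{N_v}\rVert^2=N_v$, $\lVert\frac{\partial}{\partial u}\bm{\phi}_{N_v}\rVert^2=\pi^2\frac{(N_v-1)N_v(2N_v-1)}{6}$, $\bm{\phi}_{N_v}^H\frac{\partial}{\partial u}\bm{\phi}_{N_v}=j\pi\frac{N_v(N_v-1)}{2}$), which is a notational rather than substantive difference from your explicit index bookkeeping.
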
\begin{proof}
Proof follows from simplifying (\ref{eq:crbalphaknowngeneral}) using (\ref{eq:wlstructure}), and is provided in \cite{pote23dissert}.
\end{proof}The CRB expression in (\ref{eq:crbsvamsens}) differs from the expression in (\ref{eq:crbbenchsens}) in two ways. The first term in (\ref{eq:crbsvamsens}) involves an array of dimension $M=N-N_v+1$ instead of $N$ in (\ref{eq:crbbenchsens}). For large array sizes, $N$, which are typical in mmWave systems, we expect this term to be similar to that in (\ref{eq:crbbenchsens}).  Secondly, the denominator in (\ref{eq:crbsvamsens}) has an additional second term `$+G$'.
We show that the conditions under which $G\geq 0$ is not difficult to satisfy, by deriving a sufficient condition to ensure the same.\begin{theorem}\label{thm:Gnonneg}
    $G$ in (\ref{eq:Gexp}) is non-negative if\begin{equation}
        \frac{\bm{\phi}^H_M(u)\mathbf{F}^{\mathrm{P}}(\mathbf{F}^{\mathrm{P}})^H\bm{\phi}_M(u)}{\lVert\bm{\phi}_M(u)\rVert^2}\geq\frac{\lambda_{\mathrm{max}}\left((\mathbf{F}^{\mathrm{P}})^H\mathbf{P}_{u,\perp}\mathbf{F}^{\mathrm{P}}\right)}{4},\label{eq:Gnonnegcond}
    \end{equation}where $\lambda_{\mathrm{max}}(\mathbf{X})$ denotes the largest eigenvalue of the matrix $\mathbf{X}$. $\mathbf{P}_{u,\perp}=[\bm{\phi}_{M}(u)\> \bm{\phi}^{\perp}_{M}(u)]\left[\begin{array}{cc}
       \lVert\bm{\phi}_{M}(u)\rVert^2  &  0\\
       0  & \lVert\bm{\phi}^{\perp}_{M}(u)\rVert^2 
    \end{array}\right]^{-1}$ $\times[\bm{\phi}_{M}(u)\>\> \bm{\phi}^{\perp}_{M}(u)]^H$ denotes a projection onto the subspace of orthogonal vectors $\bm{\phi}_M(u)$ and $\bm{\phi}^{\perp}_M(u)=\left[\begin{array}{cccc}-\frac{(M-1)}{2}&(1-\frac{(M-1)}{2})&\ldots&\frac{(M-1)}{2}\end{array}\right]^T\odot\bm{\phi}_M(u)$. 
\end{theorem}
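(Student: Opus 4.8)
The plan is to eliminate the derivative from $G$ by decomposing it along the two orthogonal directions that define $\mathbf{P}_{u,\perp}$, and then to reduce the non-negativity of $G$ to a scalar inequality governed by a $2\times 2$ Gram matrix. First I would record the elementary identity $\frac{\partial}{\partial u}\bm{\phi}_M(u)=j\pi\big(\bm{\phi}^{\perp}_M(u)+\tfrac{M-1}{2}\bm{\phi}_M(u)\big)$, which holds because $\frac{\partial}{\partial u}\bm{\phi}_M(u)=j\pi[0,1,\dots,M-1]^T\odot\bm{\phi}_M(u)$ and the centering $[0,\dots,M-1]-\tfrac{M-1}{2}\mathbf{1}$ is exactly the index vector defining $\bm{\phi}^{\perp}_M(u)$. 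I would also verify the orthogonality $\bm{\phi}_M^H(u)\bm{\phi}^{\perp}_M(u)=\sum_{k}(k-\tfrac{M-1}{2})=0$, consistent with the paper calling these vectors orthogonal. Writing $\mathbf{R}=\mathbf{F}^{\mathrm{P}}(\mathbf{F}^{\mathrm{P}})^H$ and substituting this decomposition into the $\mathrm{Im}\{\cdot\}$ term of (\ref{eq:Gexp}), the component of the derivative along $\bm{\phi}_M$ contributes a purely real quantity while the orthogonal component contributes $\mathrm{Re}\{(\bm{\phi}^{\perp}_M)^H\mathbf{R}\bm{\phi}_M\}$; collecting terms, $G$ collapses to
\[G=\pi^2(N_v-1)\Big[\gamma\,\bm{\phi}_M^H\mathbf{R}\bm{\phi}_M+\mathrm{Re}\{(\bm{\phi}^{\perp}_M)^H\mathbf{R}\bm{\phi}_M\}\Big],\qquad \gamma=\tfrac{2N_v-1}{6}+\tfrac{M-1}{2}.\]
Since the prefactor $\pi^2(N_v-1)\ge 0$ (the case $N_v=1$ gives $G=0$ trivially), it suffices to show the bracket is non-negative, and for $N_v\ge 2$ one has $\gamma\ge M/2$.

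Next I would pass to the orthonormal basis $\mathbf{e}_1=\bm{\phi}_M/\lVert\bm{\phi}_M\rVert$, $\mathbf{e}_2=\bm{\phi}^{\perp}_M/\lVert\bm{\phi}^{\perp}_M\rVert$ and introduce the $2\times 2$ positive semidefinite matrix $\mathbf{K}=[\mathbf{e}_1\;\mathbf{e}_2]^H\mathbf{R}[\mathbf{e}_1\;\mathbf{e}_2]$. Its $(1,1)$ entry is precisely the left-hand side of the claimed condition (\ref{eq:Gnonnegcond}), and because $\mathbf{P}_{u,\perp}=[\mathbf{e}_1\;\mathbf{e}_2][\mathbf{e}_1\;\mathbf{e}_2]^H$, the nonzero eigenvalues of $(\mathbf{F}^{\mathrm{P}})^H\mathbf{P}_{u,\perp}\mathbf{F}^{\mathrm{P}}$ coincide with those of $\mathbf{K}$, so $\lambda_{\mathrm{max}}((\mathbf{F}^{\mathrm{P}})^H\mathbf{P}_{u,\perp}\mathbf{F}^{\mathrm{P}})=\lambda_{\mathrm{max}}(\mathbf{K})$. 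Rewriting the bracket through the entries of $\mathbf{K}$ and the norms $\lVert\bm{\phi}_M\rVert^2=M$, $\lVert\bm{\phi}^{\perp}_M\rVert^2=\tfrac{M(M^2-1)}{12}$, the problem becomes the scalar inequality $\gamma\lVert\bm{\phi}_M\rVert^2 K_{11}+\lVert\bm{\phi}_M\rVert\lVert\bm{\phi}^{\perp}_M\rVert\,\mathrm{Re}(K_{12})\ge 0$, while the hypothesis (\ref{eq:Gnonnegcond}) reads $\lambda_{\mathrm{max}}(\mathbf{K})\le 4K_{11}$.

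The crux is to bound the (possibly negative) cross term $\mathrm{Re}(K_{12})\ge-|K_{12}|$ sharply in terms of $K_{11}$ and $\lambda_{\mathrm{max}}(\mathbf{K})$. The key observation I would use is the exact $2\times 2$ identity $|K_{12}|^2=(K_{11}-\lambda_{\mathrm{min}}(\mathbf{K}))(\lambda_{\mathrm{max}}(\mathbf{K})-K_{11})$, obtained from $\mathrm{tr}\,\mathbf{K}=\lambda_{\mathrm{max}}+\lambda_{\mathrm{min}}$ and $\det\mathbf{K}=\lambda_{\mathrm{max}}\lambda_{\mathrm{min}}$. Since the diagonal entry satisfies $\lambda_{\mathrm{min}}\le K_{11}\le\lambda_{\mathrm{max}}$, both factors are non-negative; using $\lambda_{\mathrm{min}}\ge 0$ together with $\lambda_{\mathrm{max}}\le 4K_{11}$ yields $|K_{12}|^2\le K_{11}\cdot 3K_{11}=3K_{11}^2$, i.e.\ $|K_{12}|\le\sqrt{3}\,K_{11}$. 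Substituting this into the scalar inequality reduces the claim to $\gamma\lVert\bm{\phi}_M\rVert\ge\sqrt{3}\,\lVert\bm{\phi}^{\perp}_M\rVert$, which holds because $\sqrt{3}\,\lVert\bm{\phi}^{\perp}_M\rVert=\tfrac12\sqrt{M(M^2-1)}\le\tfrac12 M^{3/2}\le\gamma\sqrt{M}=\gamma\lVert\bm{\phi}_M\rVert$ for $N_v\ge 2$.

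I expect the sharp cross-term bound to be the main obstacle. The naive estimates $|K_{12}|\le\tfrac12\lambda_{\mathrm{max}}(\mathbf{K})$ or $K_{22}\le\lambda_{\mathrm{max}}(\mathbf{K})$ only give $|K_{12}|\le 2K_{11}$, which would demand $\gamma\gtrsim M/\sqrt{3}$; this is strictly stronger than what $\gamma\approx M/2$ provides, so those bounds fail to close the argument. It is precisely the eigenvalue identity $|K_{12}|^2=(K_{11}-\lambda_{\mathrm{min}})(\lambda_{\mathrm{max}}-K_{11})$ that produces the correct constant $\sqrt{3}$ and makes the factor $1/4$ appearing in (\ref{eq:Gnonnegcond}) exactly sufficient; getting this constant right, rather than any of the routine algebra, is where the real work lies.
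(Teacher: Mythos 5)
Your proof is correct and follows essentially the same route as the paper: the same decomposition $\frac{\partial}{\partial u}\bm{\phi}_M(u)=j\pi\bigl(\frac{M-1}{2}\bm{\phi}_M(u)+\bm{\phi}^{\perp}_M(u)\bigr)$, the same reduction of $G$ to the bracket $\gamma\,\bm{\phi}_M^H\mathbf{R}\bm{\phi}_M+\mathrm{Re}\{(\bm{\phi}^{\perp}_M)^H\mathbf{R}\bm{\phi}_M\}$, and the same final numerical inequality $C^2(N,N_v)+1\geq 4$. The only cosmetic difference is that you obtain the key cross-term bound $\vert K_{12}\vert^2\leq K_{11}(\lambda_{\mathrm{max}}-K_{11})$ from the $2\times 2$ trace/determinant identity, whereas the paper derives the identical inequality by bounding the quadratic form $\bm{\phi}_M^H\mathbf{R}\mathbf{P}_{u,\perp}\mathbf{R}\bm{\phi}_M$ with $\lambda_{\mathrm{max}}\bigl((\mathbf{F}^{\mathrm{P}})^H\mathbf{P}_{u,\perp}\mathbf{F}^{\mathrm{P}}\bigr)$.
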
\begin{proof}The proof is provided in Appendix section~\ref{sec:Gnonnegproof}.
\end{proof}\noindent The following remark discusses the implication of Theorem~\ref{thm:Gnonneg}.\begin{remark}
    If the left singular vectors of $\mathbf{F}^{\mathrm{P}}$ includes $\bm{\phi}_M(u)$ and $\bm{\phi}^{\perp}_M(u)$ (post-normalization), we can further simplify (\ref{eq:Gnonnegcond}). If $\bm{\phi}_M(u)$ leads $\bm{\phi}^{\perp}_M(u)$, then (\ref{eq:Gnonnegcond}) is trivially satisfied. If the opposite is true, in that, $\bm{\phi}^{\perp}_M(u)$ leads $\bm{\phi}_M(u)$, then (\ref{eq:Gnonnegcond}) describes the required gap in the two corresponding singular values within $\mathbf{F}^{\mathrm{P}}$. In practice, we expect an adaptive scheme to choose beamformers close to the direction of the DoA. Also, it was found out that $G\geq 0$ very often even when $\mathbf{F}^{\mathrm{P}}$ contained i.i.d. complex Gaussian random entries.
\end{remark}
\begin{figure*}
    \centering
    \begin{tabular}{ccc}
        \includegraphics[width=0.3\linewidth]{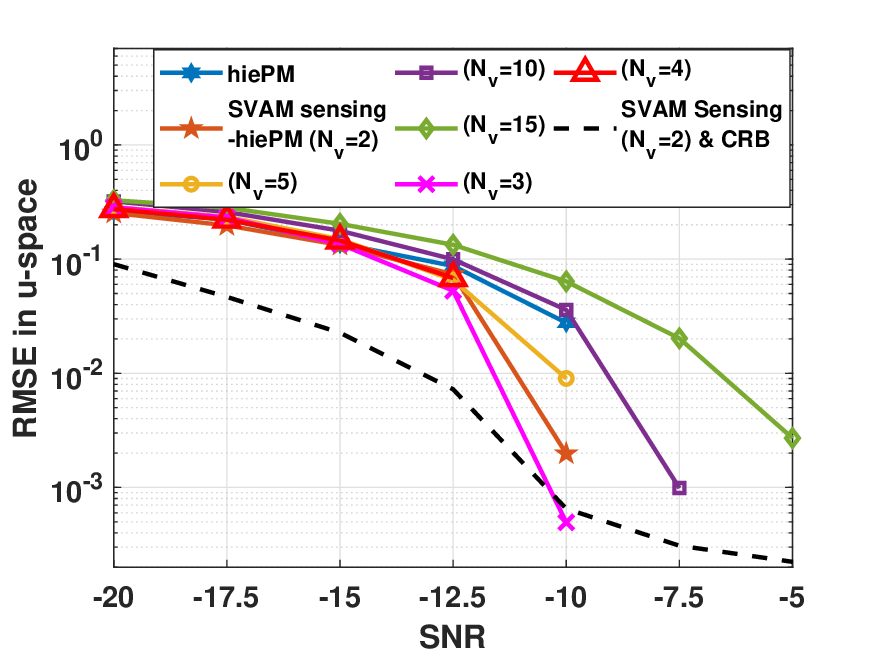}
        & \includegraphics[width=0.3\linewidth]{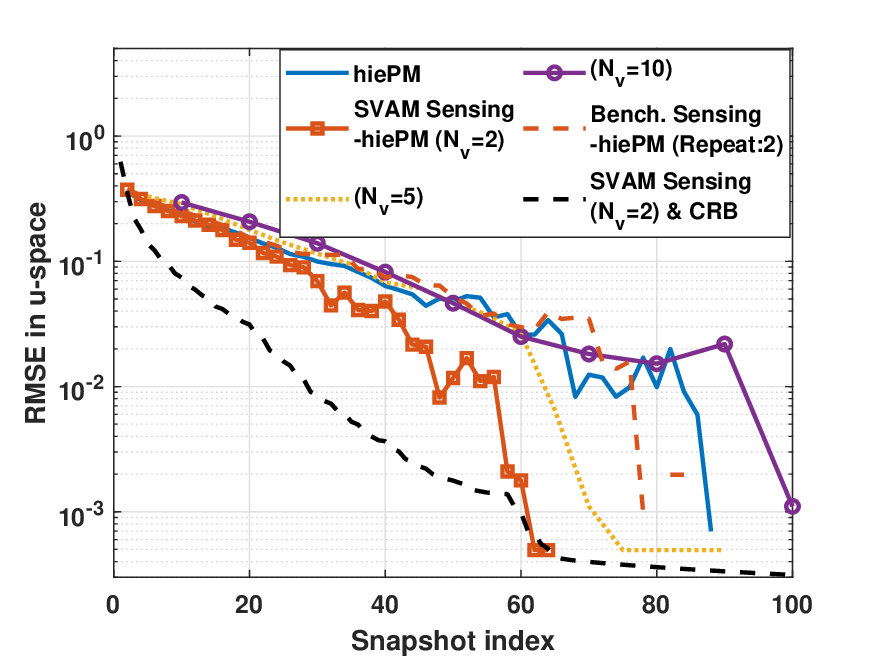} & \includegraphics[width=0.3\linewidth]{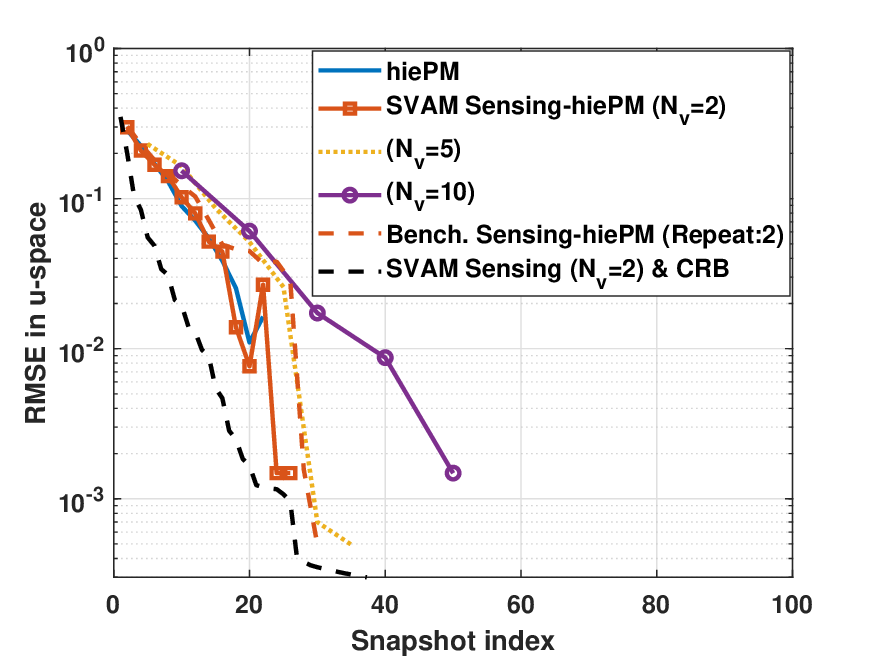}\\
        (a) Snapshots, $L=60$ & (b) SNR = $-10$ dB & (c) SNR = $-5$ dB
    \end{tabular}
    \caption{
    (a) RMSE vs SNR: Proposed SVAM sensing with modified-hiePM improves over hiePM\cite{chiu19} when $N_v\in\{2,3,4,5\}$. (b) RMSE vs number of snapshots at SNR=$-10$ dB: Proposed sensing reduces the duration needed for beam alignment when $N_v=2$. (c) RMSE vs number of snapshots at SNR=$-5$ dB: At high SNR, the proposed scheme still improves over the benchmark scheme, but takes slightly more time to converge.}\vspace{-1em}
    \label{fig:alphaknown}
\end{figure*}

 Theorem~\ref{thm:Gnonneg} emphasizes that the sufficient condition for ensuring $G\geq0$ are not difficult to satisfy and highlights the benefit from using SVAM sensing
 compared to the benchmark scheme. The benchmark analysis can be extended to include more general scenario, wherein it need not repeat the beamformer in $N_v$ snapshots. Since the contribution from each snapshot appears independent of other snapshots, as a linear sum in the denominator in the general CRB expression in (\ref{eq:crbalphaknowngeneral}), any time a beamformer is repeated in a benchmark scheme, it can be replaced with the proposed scheme. A similar impact and analysis can be carried out to reveal ensuing benefits.
 \subsection{Case Study: Combining SVAM with HiePM Framework}\label{sec:alphaknownsvamhiePM}
The hiePM algorithm\cite{chiu19}  processes each new snapshot and updates the beamformer based on the current estimate of the posterior density on the unknown DoA. To incorporate the proposed sensing, one approach is to update the beamformer after every $N_v$ snapshots, where $N_v$ denotes the size of the virtual ULA. Within each $N_v$ interval, a SVAM beamformer, $\mathbf{f}_t$, is designed; $\mathbf{f}_t$ is simply a codeword from the hierarchical codebook that satisfies the selection criteria within hiePM framework. The beamfomer (of physical antenna size, $N$) for $N_v$ snapshots within this interval is constructed as in (\ref{eq:wlstructure}). The remainder steps in Algorithm 1 in \cite{chiu19} are compatible with the proposed sensing.
This modification also becomes crucial when $\alpha$ is unknown; this is discussed in Section~\ref{sec:alphaunknown}.
\begin{remark}
    The impact of the modification can be understood in the following manner. The hiePM strategy in \cite{chiu19} may repeat the same beamformer multiple times until the posterior condition triggers a new beamformer. While it repeats the beamformer, it gains only in terms of the Signal-to-Noise Ratio (SNR), which is also evident from the CRB analysis for the benchmark scheme in (\ref{eq:crbbenchsens}). $N_v$ multiplied with $P_s$ in (\ref{eq:crbbenchsens}) indicates a $N_v$-fold SNR boost. Instead of a repeated beamformer, the proposed sensing aims to achieve both, a SNR boost and at the same time an \emph{(virtual) aperture} gain (`$+G$' term in the CRB analysis). Since the adaptive beamformer sequence is unknown apriori, analysis of the proposed modification to hiePM in this subsection is more involved. Instead, we provide empirical studies in the next subsection.  
\end{remark}\begin{remark}
     The proposed sensing can be incorporated in other beam alignment procedures as well. The inference procedure and adaptive strategy may require suitable modifications, or be left unaltered as demonstrated using the hiePM case study in this subsection. 

\end{remark}
\subsection{Numerical Results}\label{sec:numsecalphaknown}
We numerically analyze the benefits of the proposed sensing when used with the hiePM framework in \cite{chiu19}, and when $\alpha$ is assumed to be known.
We compare the hiePM\cite{chiu19} approach with i. the proposed modified algorithm in Section~\ref{sec:alphaknownsvamhiePM}, and ii. the benchmark scheme where the beamformer is simply repeated, as described in Section~\ref{sec:alphaknownCRBbenchscheme}. We also plot the CRB\footnote{Note that since the posterior density is computed on a grid which includes the ground truth angle, the performance may be biased. Thus, CRB is not a valid lower bound but it is provided for insight.} conditioned on the beamformers selected by the modified algorithm with $N_v=2$. We assume that the DoA, $u$, lies in between $[0,1)$ i.e., RoI is $\frac{1}{2}$ of the entire space. The information about the RoI is incorporated into the posterior calculation. We set the following parameters as: Physical antenna size: $N=64$, number of grid points uniformly spaced in RoI: $G=64$, number of random realizations for averaging: $Q=1000$. The main metric employed for comparison is the root mean squared error (RMSE) computed in $u$-space 
as $
    \mbox{RMSE}=\sqrt{\frac{1}{Q}\sum_{q=1}^Q\left(\hat{u}_{q}-u_{q}\right)^2}
$, $u_q$ and $\hat{u}_{q}$ denotes the true and the estimated angle respectively, in the $q-$th realization. The estimated angle is the on-grid angle corresponding to the mode of the posterior density estimated by the algorithm for each curve. In this work we define SNR as the signal power to noise ratio at each antenna i.e., before combining.
\subsubsection{Performance as a function of SNR}
In Fig.~\ref{fig:alphaknown} (a) we plot the RMSE as a function of SNR for the modified-hiePM under the proposed SVAM sensing using different virtual ULA sizes. As observed in this plot, given $L=60$ snapshots, the curves using $N_v\in\{2,3,4,5\}$ improve over the hiePM in \cite{chiu19}. As the virtual aperture increases, the number of beamformer updates, given by `$L/N_v$', reduces. For $N_v\in\{10,15\}$, the reduced number of updates or longer acquisition time before a beamformer update, is seen to negatively impact the performance. Studying the tradeoff between the virtual antenna size and the frequency of update as they impact the beam alignment performance is an important future direction. 
\subsubsection{Performance as a function of number of snapshots}
In Fig.~\ref{fig:alphaknown} (b) and (c), we plot the RMSE as a function of number of snapshots at SNR=$-10$ dB and $-5$ dB, respectively. As observed from the Fig.~\ref{fig:alphaknown} (b), the performance of hiePM improves under the proposed sensing when $N_v$ is set to $2$. Setting it to a higher value such as $N_v=10$ degrades the performance, for reasons similar to those discussed for Fig.~\ref{fig:alphaknown} (a). In red dashed curve we plot the performance for hiePM with benchmark sensing scheme (in Section~\ref{sec:alphaknownCRBbenchscheme}). Although the benchmark scheme improves the beam alignment duration over hiePM (solid blue curve), it still is much higher compared to setting $N_v=2$ in the proposed sensing. This indicates that the improvement under SVAM is not merely due to more reliable beamformer adaptations caused by basing adaptations on more measurements. It is inherent to the \emph{phase response} characteristics of the beamformer under SVAM, as described in (\ref{eq:scalarmeas_propsensing}).
Note that as we increase SNR to $-5$ dB, the performance gains can be limited as seen from Fig.~\ref{fig:alphaknown}(b). This highlights that SVAM is beneficial in reducing the training time at low SNR. Also it is again observed that SVAM improves over the benchmark scheme. Thus, a 
dynamic method to adapt the virtual ULA size is an interesting future direction.   
\section{Beam Alignment Algorithm with Unknown $\alpha$}
\label{sec:alphaunknown}
We begin with a wide RoI, $u\in[u_l,u_r],u_l<u_r,u_l,u_r\in[-1,1)$, and presume that the DoA lies within the RoI. Therefore, we initially set the SVAM beamformer to span this region and suppress any interference coming from outside the RoI. This ensures that we incorporate any prior information available about the DoA. The approach is also practical, as base stations are typically deployed with dedicated antennas to serve a specific RoI. We collect measurements over time and process them to compute an approximate posterior on the unknown angle. We adapt SVAM beamformer once we accrue enough posterior mass around the mode of the posterior. We now describe the methodology adopted in this work to estimate the posterior and to adapt the SVAM beamformer over time.
\subsection{Algorithm Preliminaries: Initializing Angular Grid and Stochastic Modeling of both $\alpha$ and $u$}
Since $\alpha$ is unknown but assumed to be fixed during the training phase, it can be estimated along with the DoA given \emph{two} or more measurements. We observe this from the conditional CRB on variance for angle estimation when $\alpha$ is unknown. Let $\mathbf{W}=[\mathbf{w}_0,\ldots,\mathbf{w}_{L-1}]\in\mathbb{C}^{N\times L}$ denote the beamformer matrix used to generate measurements $\mathbf{y}=[y_0,\ldots,y_{L-1}]^T\in\mathbb{C}^L$ as in (\ref{eq:scalarmeas}). The beamformers may be designed generally, and not necessarily under SVAM. The $\mathrm{CRB}(u)$ is given by\begin{IEEEeqnarray}{ll}
    &\mathrm{CRB}(u)\nonumber\\
    &=\frac{\sigma_n^2}{2P_s\vert\alpha\vert^2}\left\{\left(\frac{\partial}{\partial u}\bm{\phi}_N(u)\right)^H\mathbf{W}\right.\nonumber\\
    &\>\>\left.\times\left(\mathbf{I}-\frac{\mathbf{W}^H\bm{\phi}_N(u)\bm{\phi}_N(u)^H\mathbf{W}}{\bm{\phi}_N(u)^H\mathbf{W}\mathbf{W}^H\bm{\phi}_N(u)}\right)\mathbf{W}^H\frac{\partial}{\partial u}\bm{\phi}_N(u)\right\}^{-1},\IEEEeqnarraynumspace
\end{IEEEeqnarray}and can be derived from the conditional CRB expression (Theorem 4.1) in \cite{stoica89} by stacking all $L$ snapshots along a column and replacing $\bm{\phi}_N(u)$ and $\frac{\partial}{\partial u}\bm{\phi}_N(u)$ by $\mathbf{W}^H\bm{\phi}_N(u)$ and $\mathbf{W}^H\frac{\partial}{\partial u}\bm{\phi}_N(u)$, respectively. For $L=1$, we can see that the CRB is infinite. Furthermore, if the beamformer is kept fixed over time i.e., if $\mathbf{w}_l=\mathbf{w}_0,l\in[L]$, or in general if $\mathbf{W}$ is rank-one, even then the CRB is infinite. This exercise highlights a key requirement for being able to estimate both DoA and $\alpha$ simultaneously. The magnitude or phase response of the beamformer $\mathbf{w}_l$ should have variation over time, and it must vary differently for different angles.
The requirement is not satisfied in the case of hiePM algorithm, if initial measurements are all taken using a fixed codeword, which is possible, early on, within the hiePM framework\footnote{The CRB analysis is applicable when the unknowns $(\alpha,u)$ are estimated in the maximum likelihood sense.
The issues can be circumvented by imposing a grid or strong prior information on $\alpha$\cite{ronquillo19}.}.
In the proposed sensing scheme this condition is naturally prevented, as the phase response (over time) leads to the virtual array manifold, which is sufficient to estimate both $\alpha$ and $u$ when $N_v>1$ and given at least two measurements.

The two basic pre-requisities for adapting the SVAM beamformer are the next beam direction to steer towards, and the beamwidth. Since the problem-at-hand considers the single path scenario, the beam direction can be estimated by \emph{matching}-based criteria which results in a maximum likelihood estimate (MLE), and it also yields a \emph{deterministic} estimate for $\alpha$. However, for a good beamwidth selection we need to account for the uncertainty in the estimation procedure. The bounds for the CRB analysis can be used. In this work, we achieve this by modeling both $\alpha$ and the unknown DoA, $u$, as \emph{stochastic} variables. For the latter, we impose a uniform prior distribution in the wide RoI. We model $\alpha$ with a complex circular Gaussian distribution with a parameterized prior. 

We introduce a uniform grid, $\mathbf{u}_{\mathrm{grid}}$, of size $G$
within the RoI, $[u_l,u_r]$. Let $u_i,i\in[G]$, denote the $i$-th grid point. The initial grid can be refined as we successively reduce uncertainty of the estimate for $u$ over snapshots. The grid refinement aspect is not focused in this paper, and left as future work. For each candidate DoA, $u_i$, we estimate a corresponding value for the complex path gain; let us denote the same as $\alpha_i$. We impose the following parameterized prior on $\alpha_i$\begin{equation}
    \alpha_i\sim\mathcal{CN}(0,\gamma_i).\label{eq:alphaprior}
\end{equation}
\begin{remark}
    Assigning a different complex gain, $\alpha_i$, per candidate angle, $u_i$, is a non-trivial choice. By doing so, we allow each candidate $u_i$ to \emph{explain} the measurements at its best. Going forward, we explicitly utilize the presence of a single path by identifying suitable value for hyperparameters, $\gamma_i$, separately, instead of jointly. A joint optimization is widely adopted in the \emph{absence} of knowledge of the model order\cite{wipf04}.
\end{remark}A consequence of the prior imposed in (\ref{eq:alphaprior}) is that the marginalized pdf of the measurements $\tilde{\mathbf{y}}_t$ in (\ref{eq:virtualmeasstack}) after $(t+1)N_v$ snapshots, conditioned on angle $u=u_i$ is given by
\begin{IEEEeqnarray}{ll}
    &f(\tilde{\mathbf{y}}_t\mid u=u_i; \gamma_i)\nonumber\\
    &\quad=\int f(\tilde{\mathbf{y}}_t\mid u=u_i,\alpha_i)f(\alpha_i;\gamma_i)\>d\alpha_i\nonumber\\
    &\quad=\frac{1}{\pi^{(t+1)N_v}\mathrm{det}(\bm{\Sigma}_{i,t})}\exp{\left(-\tilde{\mathbf{y}}_t^H\bm{\Sigma}_{i,t}^{-1}\tilde{\mathbf{y}}_t\right)}
    ,\mbox{ where}\IEEEeqnarraynumspace
\end{IEEEeqnarray}\begin{IEEEeqnarray}{ll}
    &\bm{\Sigma}_{i,t}\nonumber\\
    &=P_s\gamma_i\left(\tilde{\bm{\beta}}_t(u_i)\otimes\bm{\phi}_{N_v}(u_i)\right)\left(\tilde{\bm{\beta}}_t(u_i)\otimes\bm{\phi}_{N_v}(u_i)\right)^H+\sigma_n^2\mathbf{I}\nonumber\\
    &=P_s\gamma_i\left(\tilde{\bm{\beta}}_t(u_i)\tilde{\bm{\beta}}_t(u_i)^H\right)\otimes\left(\bm{\phi}_{N_v}(u_i)\bm{\phi}_{N_v}(u_i)^H\right)+\sigma_n^2\mathbf{I},\IEEEeqnarraynumspace\label{eq:Sigmait}
\end{IEEEeqnarray}where in the last line, we use the mixed product property of Kronecker product, namely if matrix products $\mathbf{AC}$ and $\mathbf{BD}$ are defined, then $(\mathbf{A}\otimes\mathbf{B})(\mathbf{C}\otimes\mathbf{D})=\mathbf{AC}\otimes\mathbf{BD}$. Note that $\sigma_n^2$ is assumed to be known\footnote{This assumption is also made in other works\cite{chiu19,ronquillo19}.}. In the absence of such knowledge, it can be included in the estimation procedure (see \cite{wipf04}). We focus on the estimation of $\alpha$ in this work, and study the impact of imperfect knowledge of $\sigma_n^2$ in the simulation section.
\subsection{Estimation of Hyperparameters under the MLE Framework}
We find the hyperparameters of the imposed prior, namely $\gamma_i,i\in[G]$, in the MLE sense. 
The derivation is simple, and similar to the work in \cite{pote23,tipping03}.
Since $\bm{\Sigma}_{i,t}$ in (\ref{eq:Sigmait}) can be expressed as a rank-one perturbation to the noise covariance matrix, the determinant can be expressed in closed-form as\begin{equation}
    \mathrm{det}(\bm{\Sigma}_{i,t})=(P_s\gamma_ig_t(u_i)\lVert\bm{\phi}_{N_v}(u_i)\rVert_2^2+\sigma_n^2)\cdot(\sigma_n^2)^{(t+1)N_v-1},\label{eq:detsigma}
\end{equation}where $g_t(u_i)=\sum_{t'=0}^t\vert\beta_{t'}(u_i)\vert^2$. Also, using matrix-inversion lemma, we can simplify $\tilde{\mathbf{y}}_t^H\bm{\Sigma}_{i,t}^{-1}\tilde{\mathbf{y}}_t$ as\begin{IEEEeqnarray}{ll}
    \tilde{\mathbf{y}}_t^H\bm{\Sigma}_{i,t}^{-1}\tilde{\mathbf{y}}_t&=\sigma_n^{-2}\lVert\tilde{\mathbf{y}}_t\rVert_2^2-\frac{P_s\gamma_i}{\sigma_n^{2}}\nonumber\\&\quad\times\frac{\vert\bm{\phi}^H_{N_v}(u_i)\sum_{t'=0}^t\beta^c_{t'}(u_i)\mathbf{y}_{t'}\vert^2}{(P_s\gamma_ig_t(u_i)\lVert\bm{\phi}_{N_v}(u_i)\rVert_2^2+\sigma_n^2)}.\label{eq:ysigmay}
\end{IEEEeqnarray}Using (\ref{eq:detsigma}) and (\ref{eq:ysigmay}), we maximize $f(\tilde{\mathbf{y}}_t\mid u=u_i;\gamma_i)$ over $\gamma_i$ by setting the derivative w.r.t. the same as zero. We get\begin{IEEEeqnarray}{ll}
    \hat{\gamma}_{i,t}=\max\Biggl\{&0,\frac{1}{P_sg_t(u_i)\lVert\bm{\phi}_{N_v}(u_i)\rVert_2^2}\Bigr.\nonumber\\
    &\left.\times\left(\left\vert\frac{\left(\tilde{\bm{\beta}}_t(u_i)\otimes\bm{\phi}_{N_v}(u_i)\right)^H}{\lVert\tilde{\bm{\beta}}_t(u_i)\otimes\bm{\phi}_{N_v}(u_i)\rVert_2}\tilde{\mathbf{y}}_{t}\right\vert^2-\sigma_n^2\right)\right\}.\IEEEeqnarraynumspace\label{eq:gammaupdate}
\end{IEEEeqnarray}The above can be interpreted as normalized beamforming output power that is compensated for the noise power. The beamformer in this case is the conventional beamformer\cite{trees02} and takes into account the complex gain, $\tilde{\bm{\beta}}_t(u_i)$, from the SVAM beamformer $\mathbf{f}_{t'},t'\in[t+1]$. The posterior density of $\alpha_i$ is also complex circular Gaussian, and can be computed as\begin{equation}
    f(\alpha_i\mid\tilde{\mathbf{y}}_t,u_i;\hat{\gamma}_{i,t})=\frac{1}{\pi\hat{\sigma}^2_{\alpha_i,t}}\exp{\left(-\frac{\vert\alpha_i-\hat{\mu}_{\alpha_i,t}\vert^2}{\hat{\sigma}^2_{\alpha_i,t}}\right)},
\end{equation}\begin{IEEEeqnarray}{ll}
    \begin{aligned}\mbox{where }\hat{\mu}_{\alpha_i,t}&=\sqrt{P_s}\hat{\gamma}_{i,t}\frac{\left(\tilde{\bm{\beta}}_t(u_i)\otimes\bm{\phi}_{N_v}(u_i)\right)^H\tilde{\mathbf{y}}_t}{P_s\hat{\gamma}_{i,t}g_t(u_i)\lVert\bm{\phi}_{N_v}(u_i)\rVert_2^2+\sigma_n^2}\\
    \hat{\sigma}^2_{\alpha_i,t}&=\hat{\gamma}_{i,t}\frac{\sigma_n^2}{P_s\hat{\gamma}_{i,t}g_t(u_i)\lVert\bm{\phi}_{N_v}(u_i)\rVert_2^2+\sigma_n^2}.\end{aligned}\IEEEeqnarraynumspace\label{eq:alphaposterior}
\end{IEEEeqnarray}       
\subsection{Computing Posterior on $u$}
We begin with describing the posterior probability, $p(u=u_i\mid\tilde{\mathbf{y}}_t)$, and identifying the missing pieces for computing this posterior. Using Bayes' rule, we can write\begin{IEEEeqnarray}{ll}
    p(u=u_i\mid\tilde{\mathbf{y}}_t)&=\frac{p(u=u_i)f(\tilde{\mathbf{y}}_t\mid u_i)}{f(\tilde{\mathbf{y}}_t)}\overset{(a)}{\propto} f(\tilde{\mathbf{y}}_t\mid u_i),\IEEEeqnarraynumspace\label{eq:upostjoint}
\end{IEEEeqnarray}where $(a)$ follows from two facts: i. we rely on discrete \emph{uniform} (prior) distribution on $u$, ii. $f(\tilde{\mathbf{y}}_t)$ does not depend on $u$. Consequently, the goal is to be able to compute $f(\tilde{\mathbf{y}}_t\mid u_i)$ as accurately possible. We explore this direction next.

Using marginalization and Bayes' rule we get\begin{IEEEeqnarray}{ll}
    f(\tilde{\mathbf{y}}_t\mid u_i)&=\int f(\tilde{\mathbf{y}}_t,\alpha\mid u_i) d\alpha\nonumber\\
    &=\int f(\alpha\mid u_i) f(\tilde{\mathbf{y}}_t\mid \alpha,u_i)d\alpha.
\end{IEEEeqnarray}The integrand consists of two factors, of which, the second factor, $f(\tilde{\mathbf{y}}_t\mid \alpha,u_i)$, is well-defined, provided the distribution of noise. However, the first factor, $f(\alpha\mid u_i)$, is unavailable.

In the absence of this knowledge, we approximate the unknown distribution with the best proxy available at-hand that uses all the available measurements i.e., we replace $f(\alpha\mid u_i)$ with $f(\alpha_i\mid\tilde{\mathbf{y}}_t,u_i;\hat{\gamma}_{i,t})$. The required quantity, $f(\alpha_i\mid\tilde{\mathbf{y}}_t,u_i;\hat{\gamma}_{i,t})$, was computed in the previous subsection, and we update this proxy as we collect more measurements. In other words, we compute the following \emph{approximate} likelihood\begin{IEEEeqnarray}{ll}
    \hat{f}_t(\tilde{\mathbf{y}}_t\mid u_i)
    &=\int f(\alpha_i\mid\tilde{\mathbf{y}}_t,u_i;\hat{\gamma}_{i,t}) f(\tilde{\mathbf{y}}_t\mid \alpha_i,u_i)d\alpha_i\nonumber\\
    &=\frac{1}{\pi^{(t+1)N_v}\mathrm{det}(\tilde{\bm{\Sigma}}_{i,t})}\exp{\left(-\tilde{\mathbf{e}}_{i,t}^H\tilde{\bm{\Sigma}}^{-1}_{i,t}\tilde{\mathbf{e}}_{i,t}\right)},\label{eq:likelyu_i}\IEEEeqnarraynumspace
\end{IEEEeqnarray}where $\tilde{\mathbf{e}}_{i,t}=\tilde{\mathbf{y}}_t-\tilde{\bm{\mu}}_{i,t}$, $\tilde{\bm{\mu}}_{i,t}=\sqrt{P_s}\hat{\mu}_{\alpha_i,t}\left(\tilde{\bm{\beta}}_t(u_i)\otimes\bm{\phi}_{N_v}(u_i)\right)$ and $\tilde{\bm{\Sigma}}_{i,t}=P_s\hat{\sigma}^2_{\alpha_i,t}\left(\tilde{\bm{\beta}}_t(u_i)\tilde{\bm{\beta}}_t(u_i)^H\right)\otimes\left(\bm{\phi}_{N_v}(u_i)\bm{\phi}_{N_v}(u_i)^H\right)$ $+\sigma_n^2\mathbf{I}$. $(\hat{\mu}_{\alpha_i,t},\hat{\sigma}^2_{\alpha_i,t})$ were estimated in the previous subsection. 

Next, we demonstrate the procedure to compute the required quantities in (\ref{eq:likelyu_i}) efficiently. We provide the final result below\begin{IEEEeqnarray}{ll}
    \mathrm{det}(\tilde{\bm{\Sigma}}_{i,t})&=(P_s\sigma^2_{\alpha_i,t}g_t(u_i)\lVert\bm{\phi}_{N_v}(u_i)\rVert_2^2+\sigma_n^2)\nonumber\\
    &\quad\times(\sigma_n^2)^{(t+1)N_v-1}\\
    \tilde{\mathbf{e}}_{i,t}^H\tilde{\bm{\Sigma}}_{i,t}^{-1}\tilde{\mathbf{e}}_{i,t}&=\sigma_n^{-2}\lVert\tilde{\mathbf{e}}_{i,t}\rVert_2^2-\frac{P_s\sigma^2_{\alpha_i,t}}{\sigma_n^{2}}\nonumber\\
    &\quad\times\frac{\vert\bm{\phi}^H_{N_v}(u_i)\sum_{t'=0}^t\beta^c_{t'}(u_i)\tilde{\mathbf{e}}_{i,t'}\vert^2}{(P_s\sigma^2_{\alpha_i,t}g_t(u_i)\lVert\bm{\phi}_{N_v}(u_i)\rVert_2^2+\sigma_n^2)}.
\end{IEEEeqnarray}Once the required quantities in (\ref{eq:likelyu_i}) are computed, we compute the likelihood estimate, $\hat{f}_t(\tilde{\mathbf{y}}_t\mid u_i)$, for all grid points $u_i,i\in [G]$, in the RoI. We get the estimate for the posterior as\begin{equation}
    \hat{p}_t(u=u_i\mid\tilde{\mathbf{y}}_t)=\frac{\hat{f}_t(\tilde{\mathbf{y}}_t\mid u_i)}{\sum_{i'}\hat{f}_t(\tilde{\mathbf{y}}_t\mid u_{i'})}.\label{eq:postu_i}
\end{equation}Let $\hat{\mathbf{p}}_t=[\hat{p}_t(u_0\mid\tilde{\mathbf{y}}_t),\hat{p}_t(u_1\mid\tilde{\mathbf{y}}_t),\ldots,\hat{p}_t(u_{G-1}\mid\tilde{\mathbf{y}}_t)]^T$. We adapt the SVAM beamformer, $\mathbf{f}_{t+1}$, for the next snapshot based on the computed posterior in (\ref{eq:postu_i}). We discuss this step next.
\subsection{Proposed Adaptive SVAM Beamforming
Algorithm}
The aim is to ensure that the DoA lies within the passband of the SVAM beamformer, $\mathbf{f}_t$, so that effective received SNR for inference is high. 
A high effective SNR helps further to ensure a successful beam alignment. The presented approach still may not ensure that the DoA always stays in the passband. We circumvent this issue by evaluating the posterior $\hat{p}_t(u_i\mid\tilde{\mathbf{y}}_t)$ over the entire RoI. This allows the posterior mass to move freely and helps to recover
when the SVAM beamformer adapted prematurely in the incorrect region. In other words, even if we adapt the SVAM beamformer, the RoI is kept fixed.
\begin{algorithm}[tp]
	\SetAlgoLined
	\KwResult{Beam direction: $u^*$}
	\KwIn{$L,N_v,P_s,\sigma_n^2,\mathrm{beam\_dir},\mathrm{BW}_{\mathbf{f}}:=\mathrm{BW}_{\mathrm{initial}},p_{\mathrm{thresh}}$}
	Initialize: $\mathrm{beam\_spec\_initial}:=\{\mathrm{beam\_dir,\mathrm{BW}_{\mathbf{f}}}\},\mathbf{f}_0:=\mathrm{beamformer\_design}(\mathrm{beam\_spec\_initial})$\\
	\For{$l:=0\>\mathrm{to\>}L-1$}{
	$\mathbf{y}_l=\mathbf{w}_l^H\mathbf{x}_l$\qquad{\it(New measurement, $\mathbf{w}_l$ as in (\ref{eq:wlstructure}))}\\
 \If{$\mathrm{mod}(l+1,N_v)=0$}{
 $t:=(l+1)/N_v-1$\\
 Form $\mathbf{y}_t$, then $\tilde{\mathbf{y}}_t$ as in (\ref{eq:virtualmeas}) and (\ref{eq:virtualmeasstack}) respectively\\
 {\it($\alpha$ Prior)}: Compute $\hat{\gamma}_{i,t}$ as in (\ref{eq:gammaupdate}), $i\in[G]$\\
 {\it($\alpha$ Posterior)}: Compute $\hat{\mu}_{\alpha_i,t},\hat{\sigma}^2_{\alpha_i,t}$ as in (\ref{eq:alphaposterior})\\
 {\it(Likelihood $\mid u_i$)}: Compute likelihood in (\ref{eq:likelyu_i})\\
 {\it($u_i$ Posterior)}: $\hat{\mathbf{p}}_t:=$Compute pmf in (\ref{eq:postu_i})\\
 $\mathrm{BW}_{\mathrm{check}}:=0.5\times\mathrm{BW}_{\mathbf{f}}$\\
 $[\mathrm{peak\_prob},\mathrm{beam\_spec}]:=\mathrm{cumul\_peak}(\hat{\mathbf{p}}_t,\mathrm{BW}_{\mathrm{check}})$\hfill{\it(Algorithm~\ref{alg:cumulpeak})}\\
 \While{$\mathrm{peak\_prob}< p_{\mathrm{thresh}}$}{
 $\mathrm{BW}_{\mathrm{check}}:=2\times\mathrm{BW}_{\mathrm{check}}$\\
 $[\mathrm{peak\_prob},\mathrm{beam\_spec}]:=\mathrm{cumul\_peak}(\hat{\mathbf{p}}_t,\mathrm{BW}_{\mathrm{check}})$\hfill{\it(Algorithm~\ref{alg:cumulpeak})}
 }
 $\mathrm{BW}_{\mathbf{f}}:=\mathrm{BW}_{\mathrm{check}}$\\
 $\mathbf{f}_{t+1}$:=$\mathrm{beamformer\_design}(\mathrm{beam\_spec})$
 
 
 }
 }
 $i^*:=\arg\max_i\>\hat{p}_{L/N_v-1}(u_i\mid\tilde{\mathbf{y}}_{L/N_v-1}),u^*:=u_{\mathrm{i^*}}$
 
	\caption{Proposed Adaptive SVAM Beamforming}\label{alg:adaptbeamalign}
\end{algorithm}
\subsubsection{Proposed Algorithm}
We initially set the beam direction of the SVAM beamformer, $\mathbf{f}_0$, to the centre of the RoI, and the beamwidth to $\mathrm{BW}_{\mathbf{f}}=\mathrm{BW}_{\mathrm{initial}}<2$ in $u$-space; $\mathrm{BW}_{\mathrm{initial}}$ chosen to cover the RoI. We adapt the SVAM beamformer only if there is a sufficient posterior mass \emph{concentrated} around the posterior mode. More specifically, we adapt the beamformer if the peak of the posterior mass in any contiguous span of a specific beamwidth, that includes the posterior mode, exceeds a fixed threshold $p_{\mathrm{thresh}}$. We begin with setting the beamwidth for this search to $\mathrm{BW}_{\mathrm{check}}=0.5\times \mathrm{BW}_{\mathbf{f}}$ i.e., half of the current beamwidth for the SVAM sensing. If the threshold condition is not satisfied, $\mathrm{BW}_{\mathrm{check}}$ is doubled. This continues until the posterior threshold condition is met. Note that in the default scenario, the next beam resorts to the initial beam specifications. When the posterior threshold condition is met, the corresponding contiguous span of angular grid points is selected, and the next SVAM beamformer, $\mathbf{f}_{t+1}$, is designed to cover the selected region. The threshold parameter $p_{\mathrm{thresh}}$ may be set to a fixed value or chosen dynamically. A lower value for $p_{\mathrm{thresh}}$ allows the SVAM beamformer to adapt often, whereas a higher value makes the adaptations more cautious. We discuss the role of this crucial parameter in detail in Section~\ref{sec:numsecalphaunknown}. Algorithm~\ref{alg:adaptbeamalign} summarizes the overall approach. 
\begin{remark}The proposed adaptive beam search
mechanism has the ability to both refine, as well as correct erroneous adaptations. In the case when the beam is adapted in the wrong portion of the spatial region, the posterior mass along with the mode is expected to either shift or widen as more measurements are collected. The beam gets rectified within the proposed scheme as the search is carried out around the updated posterior mode, and it has the ability to widen the beamwidth beyond the $\mathrm{BW}_{\mathbf{f}}$ presently in use for the sensing.\end{remark} The presented approach to adapt the beamformer has an equivalent representation within the \emph{compact} hierarchical codebook, and is discussed next.

\begin{algorithm}[tp]
	\SetAlgoLined
	\KwResult{$\mathrm{peak\_prob},\mathrm{beam\_spec}:=\{\mathrm{BD\_sel},\mathrm{BW}'\}$}
	\KwIn{$\mathbf{p},\mathrm{BW}'$}
 Initialize: $\mathbf{1}_{\mathrm{BW}'}$ (vector of $1$'s of size $\equiv\mathrm{BW}'$)\\
 $\mathrm{mode}:=\arg\max_j\>\mathbf{p}(j)$\\
 $\mathrm{cumul\_prob}:=\mathbf{p}*\mathbf{1}_{\mathrm{BW}'}$\hfill(`$*$': convolution operation)\\
 $[\mathrm{peak\_prob},k]:=\max(\mathrm{cumul\_prob}(\mathrm{mode}:\mathrm{mode}+\mathrm{size}(\mathbf{1}_{\mathrm{BW}'})-1)$\\
 $\mathrm{BD\_sel}:=\mathbf{u}_{\mathrm{grid}}(\mathrm{mode}+k)-\mathrm{BW}'/2$	\caption{Cumulative Posterior Peak}\label{alg:cumulpeak}
\end{algorithm}
\begin{algorithm}[tp]
	\SetAlgoLined
	\KwResult{$l^h_{\mathrm{final}}\in[\log_2G],k^h_{\mathrm{final}}\in[2^{l^h_{\mathrm{final}}}-1]$}
	\KwIn{$l^h_{\mathrm{init}}$ {\it(current level in hierarchy)}, {\it$\mathbf{p}$ (posterior pmf)}, $G$, $p_{\mathrm{thresh}}$}
 Initialize: $l^h:=l^h_{\mathrm{init}}+1,G_{l^h}:=2^{l^h}$\\
 $\mathrm{mode}:=\arg\max_j\>\mathbf{p}(j)$\\
 {\it(node at level $l^h$)}: $k^h:=\mathrm{floor}\left(\mathrm{mode}\times\frac{G_{l^h}}{G}\right)$\\
 \For{$\bar{l}^h:=l^h\>\mathrm{to}\>0$\quad{\it(descending)}}{
 $\mathrm{node\_k\_ind}:=\frac{G}{G_{l^h}}k^h:\frac{G}{G_{l^h}}(k^h+1)-1$\\
 \eIf{$\mathrm{sum}(\mathbf{p}(\mathrm{node\_k\_ind}))\geq p_{\mathrm{thresh}}$}{
 {\it break}
 }{
 $k^h:=\mathrm{floor}(k^h/2)$\\
 $G_{l^h}:=G_{l^h}/2$
 }
 }
 $l^h_{\mathrm{final}}:=l^h,k^h_{\mathrm{final}}:=k^h$	\caption{Beam Search In Hierarchical Codebook}\label{alg:beamhiercodebk}
\end{algorithm}

\subsubsection{Using Hierarchical Codebook}The proposed scheme in Algorithm~\ref{alg:adaptbeamalign} adopts a flexible beam design which requires beam direction and beamwidth to design the SVAM beamformer. In some cases, such flexible beam designs may not be possible due to complexity, and a smaller codebook may be desired. We take an example of the (binary) hierarchical codebook\cite{alkhateeb14} and demonstrate the procedure to select a codeword based on the posterior computed in (\ref{eq:postu_i}). The procedure closely mimics the approach taken in Algorithm~\ref{alg:adaptbeamalign}, while constraining the beam to belong to the hierarchical codebook. 

The basic idea is to begin at one level below in hierarchy compared to the current level used for SVAM sensing, and traverse \emph{up} the hierarchy to satisfy the posterior condition. At one-level below current, we begin with that node which contains the posterior mode. The principle in doing so, is to try and ensure that the DoA is included in the next beam. We summarize this beam search in Algorithm~\ref{alg:beamhiercodebk} which can replace line $11-18$ in Algorithm~\ref{alg:adaptbeamalign}. We also study the impact of using a hierarchical codebook on the performance of the proposed adaptive scheme in simulation Section~\ref{sec:rmsevssnrsmallcodebook}.\begin{remark}
    A variation on the beam search Algorithm~\ref{alg:beamhiercodebk} is to initialize the search at a deeper hierarchical level, which can be favourable to converge early at high SNRs, but may lead to premature beam focusing at low SNRs. Similar idea can be incorporated with Algorithm~\ref{alg:adaptbeamalign} as well. This is not explored further in this paper, but left as future work.  
\end{remark}
\subsubsection{On HiePM-based Beam Alignment}
The posterior computed in (\ref{eq:likelyu_i}) involves approximating the posterior on $\alpha$, $f(\alpha\mid u_i)$, using a Gaussian density, $f(\alpha_i\mid\tilde{\mathbf{y}}_t,u_i;\hat{\gamma}_{i,t})$. This allows to bypass a grid approach on $\alpha$, which was explored in \cite{ronquillo19} under Algorithm 1. Although the grid approach  can be more accurate, it is computationally expensive. Also initializing an appropriate grid on $\alpha$ is a non-trivial problem.

Although the adopted prior on $\alpha$ in this work and within Alg.~2 in \cite{ronquillo19}, both are Gaussian distributed, there is an important distinction. The presented approach utilizes a \emph{parameterized} Gaussian prior unlike the fixed Gaussian prior in \cite{ronquillo19}. This has a couple of implications: a) We allow the framework to \emph{learn} the relevant parameters directly from the measurements without relying on any additional information such as prior mean or variance, b) The choice of prior here allows the framework to fit the best Gaussian posterior on $\alpha$ using the measurements, in contrast to the fixed prior case where only the posterior mean is learned from measurements and the variance in the passband depends only on the prior and the noise variance level. It is known that the inference is less sensitive to the values of higher-level hyperparameters than the values for the $\alpha$-prior distribution (see \cite{giri16} and references therein). 
This leads to more \emph{robust} learning.

Most importantly, the proposed approach processes measurements \emph{jointly} for computing the posterior on $u$ (see (\ref{eq:upostjoint})). In contrast, in \cite{ronquillo19}
the $u$-posterior is updated \emph{sequentially}. Since $\alpha$ is considered static in time, both in this work and within Algorithm 2 in \cite{ronquillo19}, and because the posteriors are approximate, it was observed that 
processing measurements \emph{jointly} instead of \emph{sequentially} performs better. In other words, the $u$-posterior at previous snapshot is not a sufficient statistic when computed \emph{approximately}. The impact is studied in 
Section~\ref{sec:rmsevssnr}.
\section{Numerical Results}\label{sec:numsecalphaunknown}
In this section we perform numerical experiments to study the performance of the proposed sensing and beam alignment Algorithm~\ref{alg:adaptbeamalign} under different scenarios, and the impact of the parameters involved. The parameters that are central to the performance include i. Size of the virtual ULA, $N_v$, ii. noise variance parameter $\sigma_n^2$, iii. posterior threshold, $p_{\mathrm{thresh}}$. In Section~\ref{sec:numsecalphaknown}, we studied the role of $N_v$; we focus on the parameters $\sigma_n^2$ and $p_{\mathrm{thresh}}$ in this section. We assume that the DoA lies in between $[0,1)$ i.e., RoI is $\frac{1}{2}$ of the entire space. The information about the RoI is incorporated into the posterior calculation. Note that the RoI can be arbitrary, and one may choose a wider or narrower RoI depending on the use case specifications. The virtual ULA constructed here has the same inter-element spacing as the physical antenna. Depending on the RoI, a wider virtual inter-element spacing can be realized\footnote{For e.g., here a $\lambda$-spacing instead of $\lambda/2$ may be realized without causing ambiguity in angular estimation, since the DoA lies in $\frac{1}{2}$ of the spatial region.}, but in the following simulation we do not exploit it further.
We set the following parameters, unless otherwise specified as: Physical antenna size: $N=64$, number of grid points uniformly spaced in RoI: $G=64$, virtual ULA size: $N_v=4$, total number of snapshots: $L=120$, number of random realizations for averaging: $Q=100$. 
The SVAM beamformer is designed using the Parks-McClellan algorithm\cite{oppenheim09} with the passband edge set to realize the desired beamwidth, and the transition width set as a fraction of the desired beamwidth.\vspace{-0.5em}
\subsection{Performance of Algorithm~\ref{alg:adaptbeamalign} as a function of SNR}\label{sec:rmsevssnr}
\begin{figure}\vspace{-1em}
    \centering
    \begin{tabular}{@{\hskip -0.2em}c}\includegraphics[width=\linewidth]{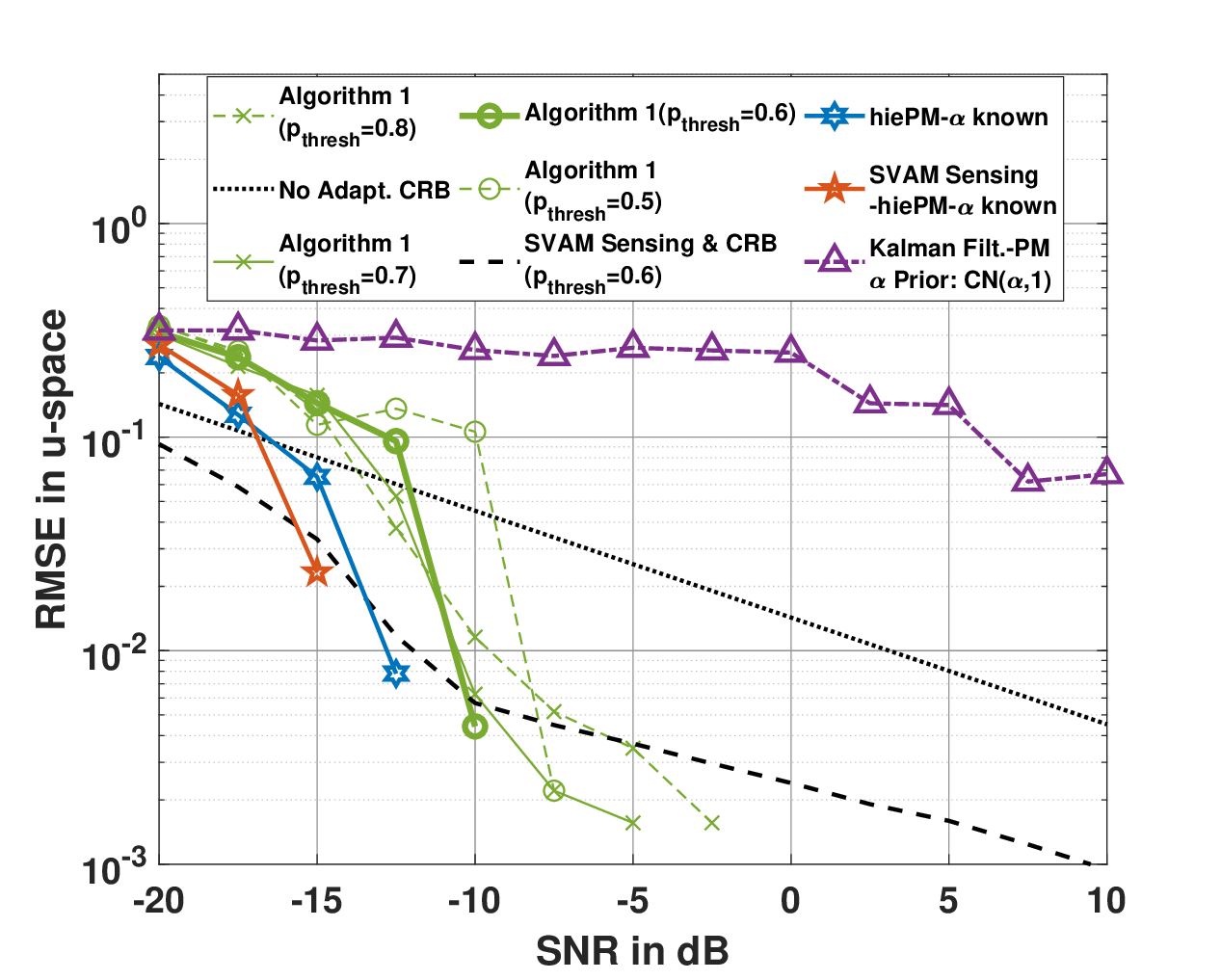}\end{tabular}\vspace{-1em}
    \caption{RMSE as a function of SNR. Performance of Algorithm 1 is observed to be close to the performance of algorithms using true $\alpha$.} 
    \label{fig:rmsevssnr}
\end{figure}In Fig.~\ref{fig:rmsevssnr}, we study the RMSE in $u$-space as a function of SNR in dB.
The red curve (with pentagram markers) plots the performance of hiePM with the SVAM sensing scheme (Section~\ref{sec:alphaknownsvamhiePM}). It is seen to improve over the hiePM algorithm in blue curve with hexagram markers, at SNR=$\{-15,-12.5\}$ dB. These two curves assume that $\alpha$ is known. At high SNR, the schemes incur zero error in this experiment as the DoA is on-grid. The green curves plots the performance of the proposed approach when $\alpha$ is unknown. The different curves correspond to different posterior threshold, $p_{\mathrm{thresh}}$, levels for adapting the beamformer. At lower thresholds, the algorithm has the ability to adapt often. This may lead to slightly unstable performance, especially at low SNR, which is observed in the green-dashed curve with $\circ$ markers where $p_{\mathrm{thresh}}=0.5$ is used.
Setting a high $p_{\mathrm{thresh}}$ requires a higher posterior mass to be accumulated for the beamformer to adapt. This adds more stability to the algorithm, but may incur slow beamformer adaptations, especially at high SNR. This is seen in green-dashed curve with $\mathrm{x}$ markers where $p_{\mathrm{thresh}}=0.8$ is set. Note that all the green curves are already close to the blue and red curves, without assuming any knowledge of $\alpha$. The curve for the Algorithm 2 in \cite{ronquillo19} is plotted in purple with $\Delta$ markers. The algorithm is provided with the true value of $\alpha$ as mean along with variance set to $1$ for the Gaussian $\alpha$-prior. It is competitive at low SNR as it hinges on the already good prior information, whereas at high SNR it disregards the provided prior. Another important reason behind the large gap between this curve, and the blue curve with hexagram markers is the fact that the static $\alpha$-case considered in \cite{ronquillo19} is not exploited\footnote{Posterior on $\alpha$ at current snapshot is used to compute likelihood of current measurement, but not utilized to update likelihoods of previous measurements.}. The proposed beam alignment procedure exploits the same, and thus demonstrates a drastic improvement as the resulting posterior on angle is more accurate. We plot the (conditional) CRB on variance of angular estimation for two schemes i. using a non-adaptive SVAM beamformer with beam direction as $0.5$ and beamwidth as $1$ ii. using adaptive SVAM beamformers chosen during runtime of Algorithm~\ref{alg:adaptbeamalign} with $p_{\mathrm{thresh}}=0.6$. As seen in Fig.~\ref{fig:rmsevssnr}, the curves using proposed approach attains (\& surpasses) the CRBs as SNR increases. Note that since the DoA is on grid which is also employed by the algorithms, they are biased. Thus, CRB is not a valid lower bound. Still it provides useful insight about the performance of the proposed inference procedure.
\subsection{Beamforming gain over time using Algorithm~\ref{alg:adaptbeamalign}}
\begin{figure}
    \centering
    \begin{tabular}{@{\hskip -0.3em}c@{\hskip -1.2em}c@{\hskip -0.3em}}
    \includegraphics[width=0.55\linewidth]{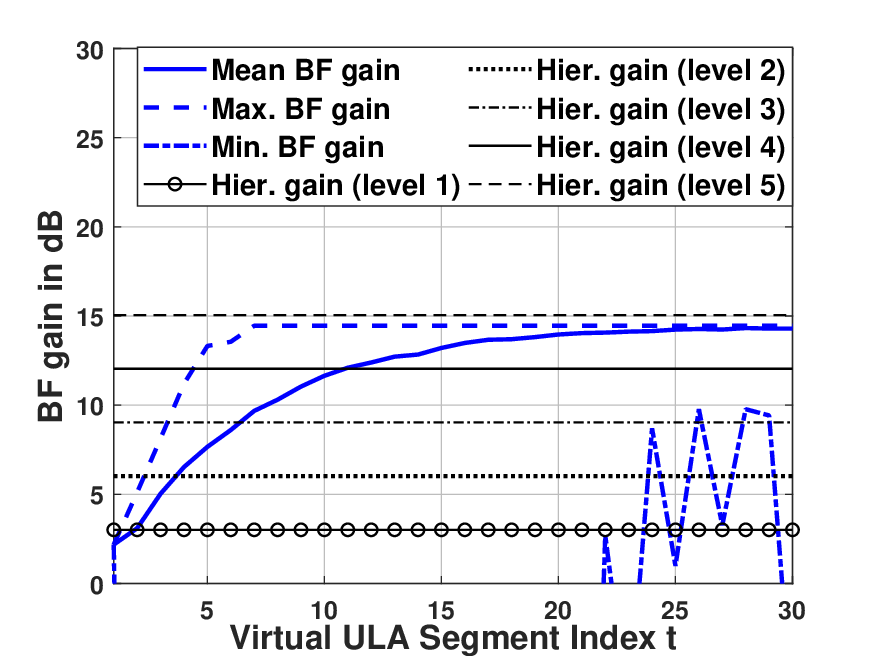}     &  \includegraphics[width=0.55\linewidth]{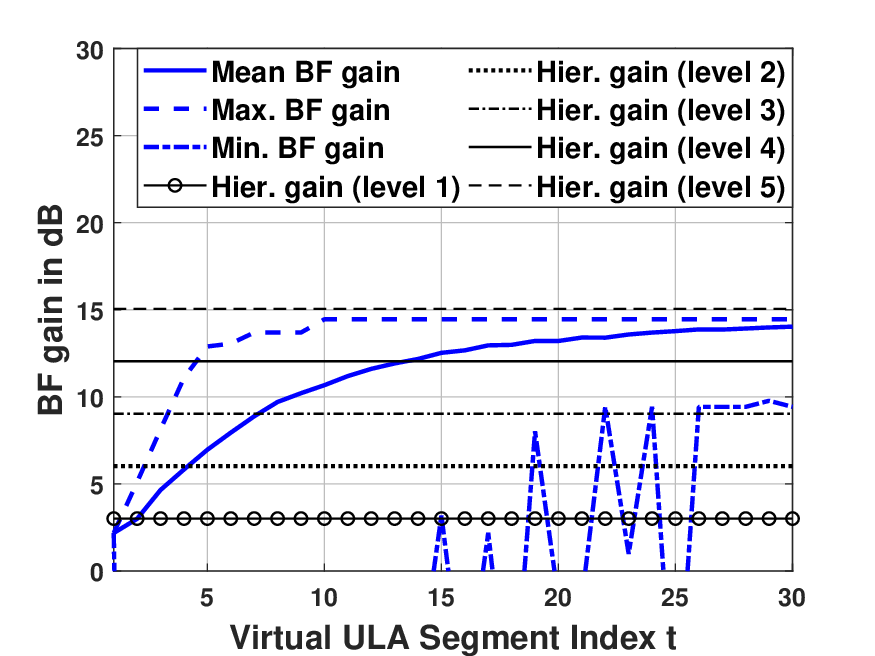}\\
     (a) $p_{\mathrm{thresh}}=0.6$   & (b) $p_{\mathrm{thresh}}=0.7$
    \end{tabular}
    \caption{Beamforming gain over time (SNR=$-10$ dB).}
    \label{fig:bfgainovertime}
\end{figure}In Fig.~\ref{fig:bfgainovertime}, we study the beamforming gain experienced by the DoA using Algorithm~\ref{alg:adaptbeamalign} at SNR of $-10$ dB. We plot the mean, maximum and the minimum beamforming gain over time (in terms of the virtual ULA segment index $t$) over $100$ random realizations. In Fig.~\ref{fig:bfgainovertime} (a) and (b) we plot the results when $p_{\mathrm{thresh}}$ is set to $0.6$ and $0.7$, respectively. As expected, at a lower posterior threshold the algorithm is more flexible, and thus the maximum beamforming gain is achieved earlier than compared to the case when a higher posterior threshold is set. In contrast, the minimum beamforming gain may be less stable when the beamformer is adapted frequently at such low SNR. This is seen in plot (a) where the minimum gain fluctuates even after $25$ virtual ULA segments i.e., $25\times 4=100$ snapshots. The solid curve representing the mean beamforming gain indicates how the gain improves over time as the beamwidth adaptively narrows. We also plot the beamforming gain achieved by an ideal beamformer. A hierarchical level $l^h\in\{1,2,\ldots,5\}$ indicates a beam focused in $2/2^{l^h+1}$ of the spatial region and consequently achieves a beamforming gain of $2^{l^h+1}/2$, in absolute scale. As observed in both the plots, beginning with a beamformer focusing on $\frac{1}{2}$ of the space (equivalently $10\log_{10}(2)\approx 3$ dB gain), the beamformer is able to adapt to $1/32$-fraction, garnering a beamforming gain of around $10\log_{10}(32)\approx 15$ dB. Thus, the proposed algorithm
adds around $12$ dB gain at $-10$ dB SNR.   
\subsection{Performance as a function of number of snapshots}\label{sec:rmsevssnap}
In Fig.~\ref{fig:rmseovertime} we plot the RMSE over time (in terms of the number of virtual ULA measurements $t$). We plot the hiePM algorithms with $\alpha$ perfectly known for comparison. In Fig.~\ref{fig:rmseovertime} (a) and (b), we plot curves corresponding to SNR$=-10$ dB and $0$ dB, respectively.
We study two curves corresponding to the proposed scheme i. $p_{\mathrm{thresh}}=0.6$ (green curve with $\mathrm{x}$ markers) ii. $p_{\mathrm{thresh}}=0.8$ (purple curve with $\diamond$ markers). The former setting allows the beamformer to aggressively adapt, which can be rewarding at high SNR, but can lead to poor beamforming gain in the low SNR regime as it adapts to incorrect regions. This is seen in the two plots, at high SNR the curve corresponding to $p_{\mathrm{thresh}}=0.6$ converges quickly, whereas at low SNR it leads to larger error and variance 
initially, compared to setting $p_{\mathrm{thresh}}=0.8$. Note that
even at low SNR and low threshold setting (Fig.~\ref{fig:rmseovertime} (a)), the yellow curve manages to adapt back to the correct spatial region, which is indicated by a drop in RMSE around $t=25$ and is able to retain the improvement over time. Overall, it is observed that a high $p_{\mathrm{thresh}}$ leads to stable yet slow convergence, which is useful when the SNR is low and if the training duration is short. At high SNR or if the training duration is longer, a low $p_{\mathrm{thresh}}$ pays off, as the \emph{algorithm is robust to recover from 
its mistakes}!

One may dynamically select $p_{\mathrm{thresh}}$ such that it is set to higher value initially during the training phase, but it may be reduced over time as the beamformer narrows further and enjoys a high beamforming gain. This is left for future work.
\subsection{Studying Impact of Noise Variance Parameter}
\begin{figure}
    \centering
    \begin{tabular}{@{\hskip -0.6em}c@{\hskip -1.2em}c}
    \includegraphics[width=0.55\linewidth]{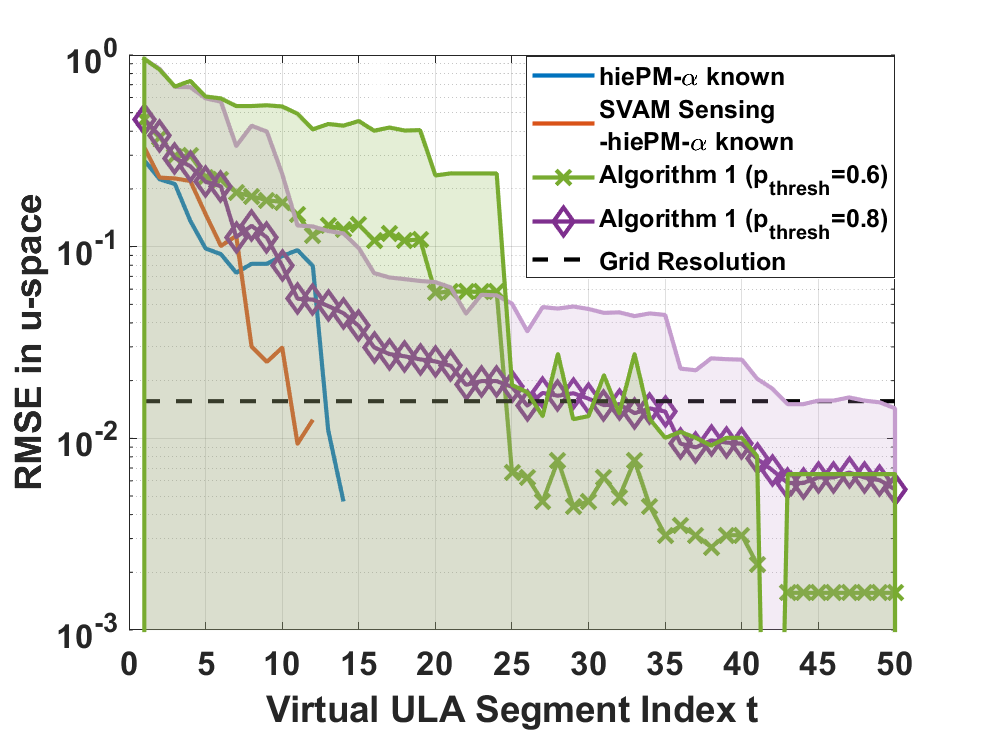}     &  \includegraphics[width=0.55\linewidth]{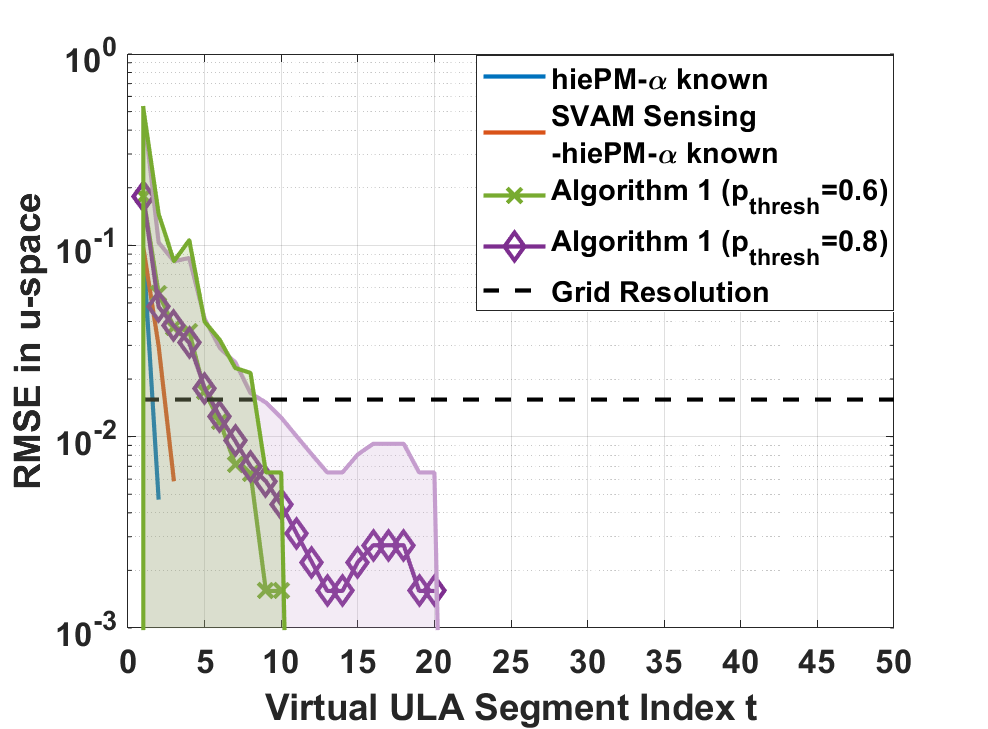}\\
     (a) SNR $=-10$ dB   & (a) SNR $=0$ dB
    \end{tabular}
    \caption{RMSE over time. Shaded region covers the RMSE $\pm$ root standard deviation of squared error over time at every time index along x-axis.}
    \label{fig:rmseovertime}
\end{figure}
\begin{figure}
    \centering
    \includegraphics[width=0.8\linewidth]{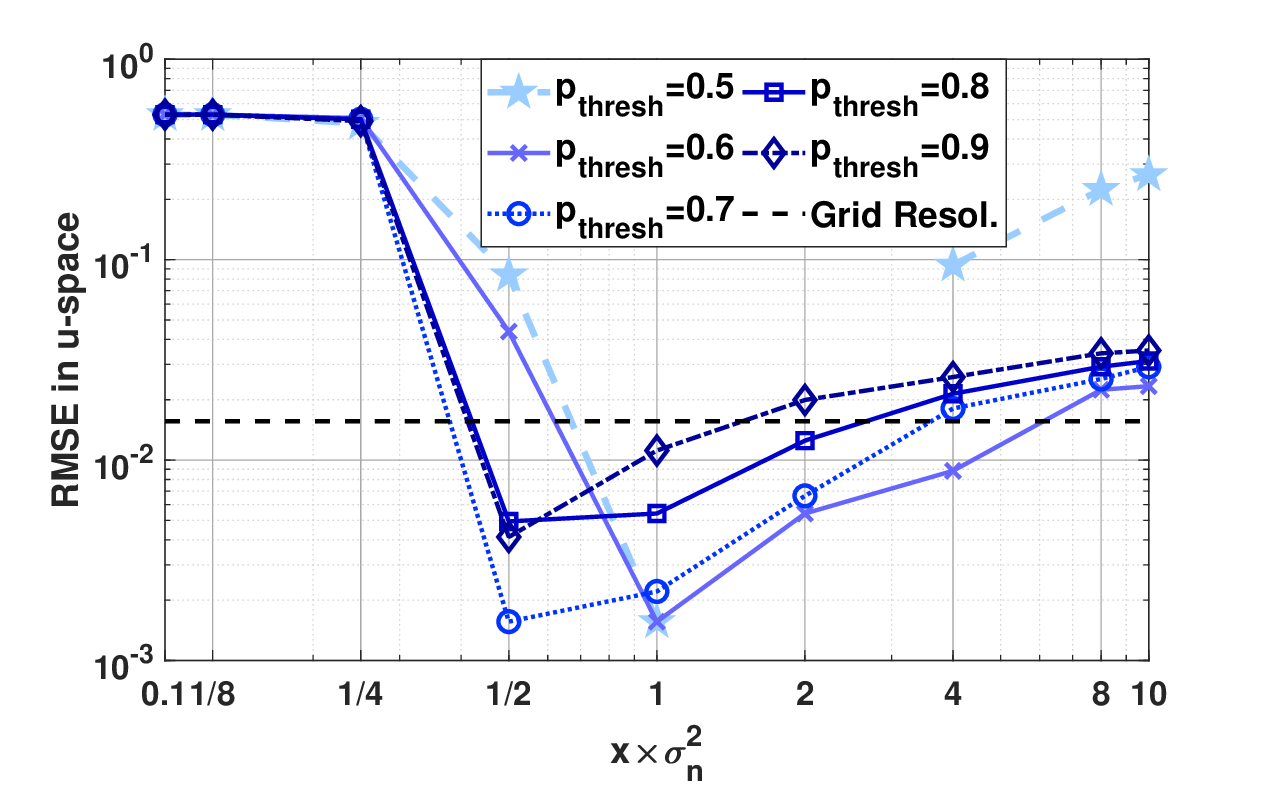}\vspace{-1em}
    \caption{RMSE vs. $x\times\sigma_n^2,x\in\{0.1,1/8,1/4,1/2,1,2,4,8,10\}$. SNR$=-10$ dB, $L=200$. As the posterior threshold increases, the optimal value for the noise variance parameter is observed to be lower than the true value.} 
    \label{fig:rmsevsnoisevarmult}
\end{figure}
We plot the RMSE as a function of different settings for the noise variance parameter in Fig.~\ref{fig:rmsevsnoisevarmult}. The optimal value is observed to be around the true noise variance value. We also plot curves corresponding to different posterior thresholds. For a low posterior threshold, the optimal value of the noise variance parameter is observed to be slightly higher ($\times 2$) than or equal to the true value. This is due to the aggressive nature of the algorithm to adapt, which can be compensated by setting a slightly higher noise variance parameter. On the other hand, a high posterior threshold already imparts more stable adaptations, and consequently does not need setting a higher noise variance parameter. In fact, at such large thresholds, the algorithm benefits from lower noise variance parameter setting, which intuitively offsets the conservative posterior threshold setting. This is observed by the curve corresponding to $p_{\mathrm{thresh}}=0.9$, where setting noise variance to $0.5\times\sigma_n^2$ leads to much better performance than setting it to the true value.\vspace{-0.5em}
\subsection{Impact of compact hierarchical codebook (Algorithm~\ref{alg:beamhiercodebk})}\label{sec:rmsevssnrsmallcodebook}
\begin{figure}
    \centering
    \includegraphics[width=0.7\linewidth]{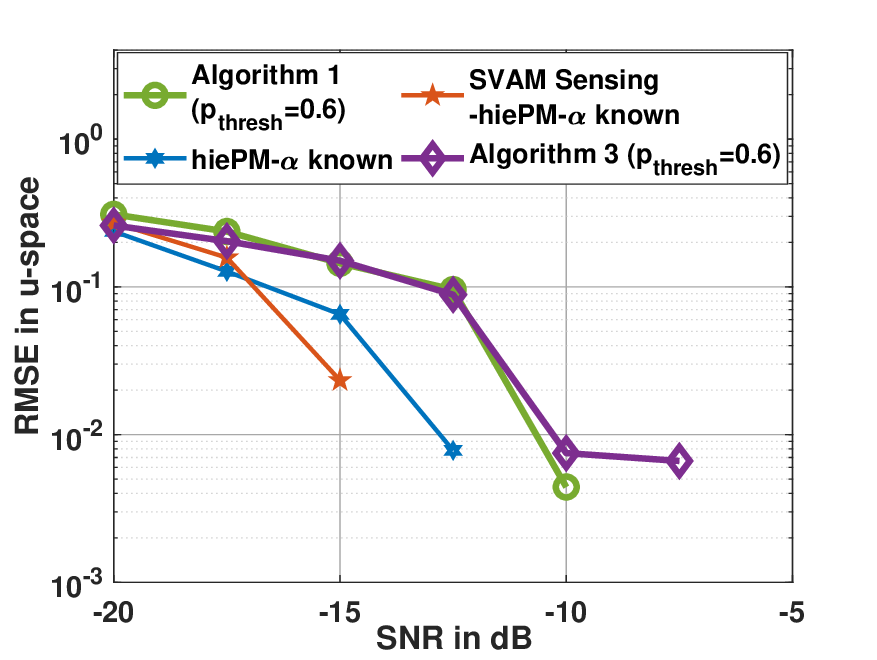}\vspace{-1em}
    \caption{RMSE as a function of SNR. A hierarchical codebook is used, which results in little to no performance loss, indicating that the overall performance gain is unaffected by the compact hierarchical codebook.}
    \label{fig:rmsevssnrsmallcodebook}
\end{figure}
In Fig.~\ref{fig:rmsevssnrsmallcodebook}, we plot the curve corresponding to using a hierarchical codebook (see purple curve with $\diamond$ markers). The approach is described in Algorithm~\ref{alg:beamhiercodebk}. We set $p_{\mathrm{thresh}}=0.6$ and compare the performance with using the flexible codebook (see green curve with $\circ$ markers). As observed in Fig.~\ref{fig:rmsevssnrsmallcodebook}, the two implementations have very similar performance. This suggests that the superior performance is due to other ingredients, namely i. improved sensing (and filter design), ii. good posterior estimate, and iii. good adaptive strategy. A large codebook size helps, but is not the key to the performance improvement demonstrated in this paper.
\section{Conclusion}\label{sec:conc}
 We proposed a novel Synthesis of Virtual Array Manifold (SVAM) sensing approach for the mmWave single RF chain systems and discussed the ensuing benefits. More specifically, the proposed sensing is demonstrated to  lead to faster and more robust beam alignment.
 We believe this contribution will have significant impact on the traditional paradigm for sensing in mmWave systems. We also proposed a novel inference scheme that estimates a posterior density on the small-scale fading coefficient and the unknown dominant path angle. Based on the proposed inference procedure, an adaptive beamforming scheme is provided that aims to collect high SNR measurements. Finally, the performance of the proposed active sensing scheme is evaluated under different scenarios, and a significant improvement over various benchmarks is demonstrated. The empirical study also reveals the impact of the different design parameters 
 on the beam alignment performance.
\section{Appendix}
 \subsection{Proof of Theorem~\ref{thm:Gnonneg}}\label{sec:Gnonnegproof}
 \begin{proof}
    We first note that the following holds\begin{equation}
        \frac{\partial}{\partial u}\bm{\phi}_M(u)=j\pi\left\{\frac{(M-1)}{2}\bm{\phi}_m(u)+\bm{\phi}^{\perp}_M(u)\right\},\label{eq:deraslincomb}
\vspace{-0.5em}    \end{equation}where $\bm{\phi}^{\perp}_M(u)=\left[\begin{array}{cccc}-\frac{(M-1)}{2}&(1-\frac{(M-1)}{2})&\ldots&\frac{(M-1)}{2}\end{array}\right]^T$ $\odot\bm{\phi}_M(u)$, `$\odot$' denotes Hadamard product. Using the result in (\ref{eq:deraslincomb}) we simplify the second term in $G$ as\vspace{-0.5em}\begin{IEEEeqnarray}{ll}
        &-\pi(N_v-1)\mathrm{Im}\left\{\left(\frac{\partial}{\partial u}\bm{\phi}_M(u)\right)^H\mathbf{F}^{\mathrm{P}}(\mathbf{F}^{\mathrm{P}})^H\bm{\phi}_M(u)\right\}\nonumber\\
        &=\pi^2(N_v-1)\mathrm{Re}\biggl\{\left(\frac{(M-1)}{2}\bm{\phi}^H_M(u)+\left(\bm{\phi}^{\perp}_M(u)\right)^H\right)\biggr.\nonumber\\
        &\quad\times\biggl.\mathbf{F}^{\mathrm{P}}(\mathbf{F}^{\mathrm{P}})^H\bm{\phi}_M(u)\biggr\}\nonumber\\
        &=\pi^2(N_v-1)\biggl\{\frac{(M-1)}{2}\bm{\phi}^H_M(u)\mathbf{F}^{\mathrm{P}}(\mathbf{F}^{\mathrm{P}})^H\bm{\phi}_M(u)\biggr.\nonumber\\
        &\biggl.\quad+\mathrm{Re}\left\{\left(\left(\bm{\phi}^{\perp}_M(u)\right)^H\right)\mathbf{F}^{\mathrm{P}}(\mathbf{F}^{\mathrm{P}})^H\bm{\phi}_M(u)\right\}\biggr\}\nonumber\\
        &\geq\pi^2(N_v-1)\biggl\{\frac{(M-1)}{2}\bm{\phi}^H_M(u)\mathbf{F}^{\mathrm{P}}(\mathbf{F}^{\mathrm{P}})^H\bm{\phi}_M(u)\biggr.\nonumber\\
        &\biggl.\quad-\left\vert\bm{\phi}^H_M(u)\mathbf{F}^{\mathrm{P}}(\mathbf{F}^{\mathrm{P}})^H\bm{\phi}^{\perp}_M(u)\right\vert\biggr\}\nonumber\\
        &=\pi^2(N_v-1)\bm{\phi}^H_M(u)\mathbf{F}^{\mathrm{P}}(\mathbf{F}^{\mathrm{P}})^H\bm{\phi}_M(u)\biggl\{\frac{(M-1)}{2}\biggr.\nonumber\\
        &\biggl.\quad-\frac{\left\vert\bm{\phi}^H_M(u)\mathbf{F}^{\mathrm{P}}(\mathbf{F}^{\mathrm{P}})^H\bm{\phi}^{\perp}_M(u)\right\vert}{\bm{\phi}^H_M(u)\mathbf{F}^{\mathrm{P}}(\mathbf{F}^{\mathrm{P}})^H\bm{\phi}_M(u)}\biggr\}.
    \end{IEEEeqnarray}Thus to ensure\vspace{-0.5em}\begin{IEEEeqnarray}{lll}
        &G&\geq 0\nonumber\\
        \implies&c\biggl\{\frac{(2N_v-1)}{6}+\frac{(M-1)}{2}\biggr.&\geq 0\nonumber\\
        &\biggl.-\frac{\left\vert\bm{\phi}^H_M(u)\mathbf{F}^{\mathrm{P}}(\mathbf{F}^{\mathrm{P}})^H\bm{\phi}^{\perp}_M(u)\right\vert}{\bm{\phi}^H_M(u)\mathbf{F}^{\mathrm{P}}(\mathbf{F}^{\mathrm{P}})^H\bm{\phi}_M(u)}\biggr\}&
    \end{IEEEeqnarray}where $c=\pi^2(N_v-1)\bm{\phi}^H_M(u)\mathbf{F}^{\mathrm{P}}(\mathbf{F}^{\mathrm{P}})^H\bm{\phi}_M(u)$. This implies\begin{IEEEeqnarray}{ll}
    \frac{\left\vert\bm{\phi}^H_M(u)\mathbf{F}^{\mathrm{P}}(\mathbf{F}^{\mathrm{P}})^H\bm{\phi}^{\perp}_M(u)\right\vert}{\bm{\phi}^H_M(u)\mathbf{F}^{\mathrm{P}}(\mathbf{F}^{\mathrm{P}})^H\bm{\phi}_M(u)}&\leq\biggl\{\frac{(3M+2N_v-4)}{6}\biggr\}\nonumber\\
    \frac{\left\vert\bm{\phi}^H_M(u)\mathbf{F}^{\mathrm{P}}(\mathbf{F}^{\mathrm{P}})^H\bm{\phi}^{\perp}_M(u)\frac{\lVert\bm{\phi}_M(u)\rVert}{\lVert\bm{\phi}^{\perp}_M(u)\rVert}\right\vert}{\bm{\phi}^H_M(u)\mathbf{F}^{\mathrm{P}}(\mathbf{F}^{\mathrm{P}})^H\bm{\phi}_M(u)}&\leq C(N,N_v),\label{eq:Gmaincond}
    \end{IEEEeqnarray}where $C(N,N_v)=\frac{\lVert\bm{\phi}_M(u)\rVert}{\lVert\bm{\phi}^{\perp}_M(u)\rVert}\biggl\{\frac{(3M+2N_v-4)}{6}\biggr\},\lVert\bm{\phi}_M(u)\rVert=\sqrt{M},\lVert\bm{\phi}^{\perp}_M(u)\rVert=\frac{\sqrt{M(M^2-1)}}{2\sqrt{3}}$. Note that $C(N,N_v)\geq \sqrt{3}$ when $N_v>1$. Let $\bm{\phi}^{\perp}_{M,\mathrm{n}}(u)=\bm{\phi}^{\perp}_M(u)\frac{\lVert\bm{\phi}_M(u)\rVert}{\lVert\bm{\phi}^{\perp}_M(u)\rVert}$. The LHS can be further simplified to get\begin{IEEEeqnarray}{ll}
        &\frac{\left\vert\bm{\phi}^H_M(u)\mathbf{F}^{\mathrm{P}}(\mathbf{F}^{\mathrm{P}})^H\bm{\phi}^{\perp}_M(u)\frac{\lVert\bm{\phi}_M(u)\rVert}{\lVert\bm{\phi}^{\perp}_M(u)\rVert}\right\vert}{\bm{\phi}^H_M(u)\mathbf{F}^{\mathrm{P}}(\mathbf{F}^{\mathrm{P}})^H\bm{\phi}_M(u)}\nonumber\\
        &=\frac{\sqrt{\bm{\phi}^H_M(u)\mathbf{F}^{\mathrm{P}}(\mathbf{F}^{\mathrm{P}})^H\bm{\phi}^{\perp}_{M,\mathrm{n}}(u)(\bm{\phi}^{\perp}_{M,\mathrm{n}}(u))^H\mathbf{F}^{\mathrm{P}}(\mathbf{F}^{\mathrm{P}})^H\bm{\phi}_M(u)}}{\bm{\phi}^H_M(u)\mathbf{F}^{\mathrm{P}}(\mathbf{F}^{\mathrm{P}})^H\bm{\phi}_M(u)}\nonumber\\
        &\overset{(a)}{=}\sqrt{\frac{\lVert\bm{\phi}_M(u)\rVert^2\bm{\phi}^H_M(u)\mathbf{F}^{\mathrm{P}}(\mathbf{F}^{\mathrm{P}})^H\mathbf{P}_{u,\perp}\mathbf{F}^{\mathrm{P}}(\mathbf{F}^{\mathrm{P}})^H\bm{\phi}_M(u)}{(\bm{\phi}^H_M(u)\mathbf{F}^{\mathrm{P}}(\mathbf{F}^{\mathrm{P}})^H\bm{\phi}_M(u))^2}-1}\nonumber\\
        &\overset{(b)}{\leq}\sqrt{\frac{\lVert\bm{\phi}_M(u)\rVert^2\lambda_{\mathrm{max}}\left((\mathbf{F}^{\mathrm{P}})^H\mathbf{P}_{u,\perp}\mathbf{F}^{\mathrm{P}}\right)}{\bm{\phi}^H_M(u)\mathbf{F}^{\mathrm{P}}(\mathbf{F}^{\mathrm{P}})^H\bm{\phi}_M(u)}-1}.
    \end{IEEEeqnarray}where in $(a)$ we express $\frac{\bm{\phi}^{\perp}_{M,\mathrm{n}}(u)(\bm{\phi}^{\perp}_{M,\mathrm{n}}(u))^H}{\lVert\bm{\phi}^{\perp}_{M,\mathrm{n}}(u)\rVert^2}=\mathbf{P}_{u,\perp}-\frac{\bm{\phi}_{M}(u)\bm{\phi}_{M}(u)^H}{\lVert\bm{\phi}_{M}(u)\rVert^2}$. $\mathbf{P}_{u,\perp}=[\bm{\phi}_{M}(u)\>\> \bm{\phi}^{\perp}_{M}(u)]$ $\times\left[\begin{array}{cc}
       \lVert\bm{\phi}_{M}(u)\rVert^2  &  0\\
       0  & \lVert\bm{\phi}^{\perp}_{M}(u)\rVert^2 
    \end{array}\right]^{-1}[\bm{\phi}_{M}(u)\>\> \bm{\phi}^{\perp}_{M}(u)]$. In $(b)$ we upper bound the quadratic term using the largest eigenvalue, $\lambda_{\mathrm{max}}\left((\mathbf{F}^{\mathrm{P}})^H\mathbf{P}_{u,\perp}\mathbf{F}^{\mathrm{P}}\right)$, of the Hermitian-symmetric matrix $(\mathbf{F}^{\mathrm{P}})^H\mathbf{P}_{u,\perp}\mathbf{F}^{\mathrm{P}}$. Thus a more stricter condition that satisfies the inequality in (\ref{eq:Gmaincond}) is given by\begin{IEEEeqnarray}{ll}
       \sqrt{\frac{\lVert\bm{\phi}_M(u)\rVert^2\lambda_{\mathrm{max}}\left((\mathbf{F}^{\mathrm{P}})^H\mathbf{P}_{u,\perp}\mathbf{F}^{\mathrm{P}}\right)}{\bm{\phi}^H_M(u)\mathbf{F}^{\mathrm{P}}(\mathbf{F}^{\mathrm{P}})^H\bm{\phi}_M(u)}-1}&\leq C(N,N_v)\IEEEeqnarraynumspace\end{IEEEeqnarray}which implies\begin{IEEEeqnarray}{ll}
       \frac{\bm{\phi}^H_M(u)\mathbf{F}^{\mathrm{P}}(\mathbf{F}^{\mathrm{P}})^H\bm{\phi}_M(u)}{\lVert\bm{\phi}_M(u)\rVert^2}\geq\frac{\lambda_{\mathrm{max}}\left((\mathbf{F}^{\mathrm{P}})^H\mathbf{P}_{u,\perp}\mathbf{F}^{\mathrm{P}}\right)}{C^2(N,N_v)+1}\IEEEeqnarraynumspace.\label{eq:Gcondstrict1}
    \end{IEEEeqnarray}Since $C^2(N,N_v)\geq 3$, an even stricter but simplified condition than (\ref{eq:Gcondstrict1}) is\begin{equation}
        \frac{\bm{\phi}^H_M(u)\mathbf{F}^{\mathrm{P}}(\mathbf{F}^{\mathrm{P}})^H\bm{\phi}_M(u)}{\lVert\bm{\phi}_M(u)\rVert^2}\geq\frac{\lambda_{\mathrm{max}}\left((\mathbf{F}^{\mathrm{P}})^H\mathbf{P}_{u,\perp}\mathbf{F}^{\mathrm{P}}\right)}{4}.\label{eq:Gcondstrict2}
    \end{equation}This concludes the proof.  
\end{proof}
 \bibliographystyle{IEEEbib}
\bibliography{main}
\end{document}